\newcommand{\removelatexerror}
{\let\@latex@error\@gobble}
\newcommand{\smallcircled}[1]{{\textcircled{\fontsize{6pt}{7pt}\selectfont #1}}}
\def\BibTeX{{\rm B\kern-.05em{\sc i\kern-.025em b}\kern-.08em
		T\kern-.1667em\lower.7ex\hbox{E}\kern-.125emX}}
\newtheorem{proposition}{\emph{Proposition}}
\newtheorem{remark}{\bf \emph{Remark}}
\begin{document}

\title{Low-Altitude UAV-Carried Movable Antenna for Joint Wireless Power Transfer and Covert Communications}

\author{Chuang~Zhang,\IEEEmembership{}
        Geng~Sun,~\IEEEmembership{Senior Member,~IEEE,}
        Jiahui~Li,\IEEEmembership{}
        Jiacheng~Wang,\IEEEmembership{}
        Qingqing~Wu,~\IEEEmembership{Senior Member,~IEEE,}
        Dusit~Niyato,~\IEEEmembership{Fellow,~IEEE,}
        Shiwen~Mao,~\IEEEmembership{Fellow,~IEEE,}
        and Tony~Q.~S.~Quek,~\IEEEmembership{Fellow,~IEEE}
        
        \thanks{Chuang Zhang is with the College of Computer Science and Technology, Jilin University, Changchun 130012, China, and also with the Singapore University of Technology and Design, Singapore 487372 (e-mail: chuangzhang1999@gmail.com).}
        \thanks{Geng Sun is with the College of Computer Science and Technology, Key Laboratory of Symbolic Computation and Knowledge Engineering of Ministry of Education, Jilin University, Changchun 130012, China, and also with the College of Computing and Data Science, Nanyang Technological University, Singapore 639798 (e-mail: sungeng@jlu.edu.cn).}
        \thanks{Jiahui Li is with the College of Computer Science and Technology, and also with the Key Laboratory of Symbolic Computation and Knowledge Engineering of Ministry of Education, Jilin University, Changchun 130012, China (e-mail: lijiahui0803@foxmail.com).}
        \thanks{Jiacheng Wang and Dusit Niyato are with the College of Computing and Data Science, Nanyang Technological University, Singapore 639798 (e-mails: jiacheng.wang@ntu.edu.sg, dniyato@ntu.edu.sg).}
        \thanks{Qingqing Wu is with the Department of Electronic Engineering, Shanghai Jiao Tong University, Shanghai, China (email: qingqingwu@sjtu.edu.cn).}
        \thanks{Shiwen Mao is with the Department of Electrical and Computer Engineering, Auburn University, Auburn, AL 36849, USA (e-mail: smao@ieee.org).}
        \thanks{Tony Q. S. Quek is with the Information System Technology and Design Pillar, Singapore University of Technology and Design, Singapore 487372 (e-mail: tonyquek@sutd.edu.sg).}
        \thanks{\textit{(Corresponding author: Geng Sun.)}}

}



\IEEEtitleabstractindextext{
\begin{abstract}	
The proliferation of Internet of Things (IoT) networks has created an urgent need for sustainable energy solutions, particularly for the battery-constrained spatially distributed IoT nodes. While low-altitude uncrewed aerial vehicles (UAVs) employed with wireless power transfer (WPT) capabilities offer a promising solution, the line-of-sight channels that facilitate efficient energy delivery also expose sensitive operational data to adversaries. This paper proposes a novel low-altitude UAV-carried movable antenna-enhanced transmission system joint WPT and covert communications, which simultaneously performs energy supplements to IoT nodes and establishes transmission links with a covert user by leveraging wireless energy signals as a natural cover. Then, we formulate a multi-objective optimization problem that jointly maximizes the total harvested energy of IoT nodes and sum achievable rate of the covert user, while minimizing the propulsion energy consumption of the low-altitude UAV. To address the non-convex and temporally coupled optimization problem, we propose a mixture-of-experts-augmented soft actor-critic (MoE-SAC) algorithm that employs a sparse Top-K gated mixture-of-shallow-experts architecture to represent multimodal policy distributions arising from the conflicting optimization objectives. We also incorporate an action projection module that explicitly enforces per-time-slot power budget constraints and antenna position constraints. Simulation results demonstrate that the proposed approach significantly outperforms some baseline approaches and other state-of-the-art deep reinforcement learning algorithms.
\end{abstract}

\begin{IEEEkeywords}
Wireless power transfer, covert communications, uncrewed aerial vehicle, movable antenna, deep reinforcement learning, mixture-of-experts
\end{IEEEkeywords}}

\maketitle
\IEEEdisplaynontitleabstractindextext
\IEEEpeerreviewmaketitle

%
%
\section{Introduction}
\label{sec_introduction}

\IEEEPARstart{T}{he} growth of Internet of Things (IoT) networks has transformed numerous industries and driven innovations in the fields like environmental monitoring, precision agriculture, disaster management and smart city development \cite{Pan2019}. Nevertheless, a large number of spatially distributed IoT nodes rely on batteries with limited energy capacity. As a result, large-scale maintenance operations for battery replacement become challenging and economically unsustainable, particularly in remote and hazardous deployment areas \cite{Pan2025}. To mitigate the inherent energy supply limitation, wireless power transfer (WPT) has arisen as a promising paradigm for enabling sustainable IoT operations through electromagnetic radiation \cite{Wang2016}. However, conventional WPT approaches remain constrained by limited transmission range and coverage area, thus rendering them insufficient for geographically dispersed IoT deployments. In such case, low-altitude uncrewed aerial vehicles (UAVs), owing to their maneuverability and capability of establishing line-of-sight (LoS) links, have been regarded as a promising platform for WPT in IoT networks \cite{Xie2025}. By optimizing flight trajectories from a starting point to a destination, the low-altitude UAVs can remotely deliver energy to spatially distributed IoT nodes as needed, thereby extending the lifespan of IoT networks.

\par During such energy replenishment missions, the low-altitude UAVs must maintain reliable communications with ground control stations or authorized users to report their operational status. Although the LoS channel condition enables efficient WPT, it also increases the risk of sensitive information leakage to potential adversaries. Traditional physical-layer security approaches aim to prevent successful decoding of transmissions by exploiting favorable channel conditions to maximize the secrecy rate, thereby ensuring that the intercepted signals remain unintelligible to unauthorized receivers \cite{Wang2019}. However, such approaches prove insufficient in security-critical scenarios such as police patrol, where preventing message decoding alone cannot prevent the revealing of operational information, as the mere detection of communication activity exposes mission presence and timing. In contrast, covert communications seek to render the optimal detector of an adversary statistically indistinguishable from random guessing by leveraging ambient noise or interference to mask the information signal \cite{Chen2023}. Notably, the energy signals emitted for WPT inherently serve as an active cover for embedding covert information, thereby forming a low-altitude UAV transmission system that integrates WPT and covert communications to simultaneously achieve sensitive information transmission and energy delivery for distributed IoT nodes.

\par Despite the advantages of such integration, the high propulsion energy consumption of low-altitude UAVs remains a major bottleneck since the battery capacity is limited by payload capability. Specifically, single-antenna configurations often require close-range operations to simultaneously enhance the WPT efficiency and communication covertness, thereby restricting the flight endurance of low-altitude UAVs. To address this limitation, low-altitude UAVs can carry antenna arrays to realize beamforming \cite{Xu2025}. Such configurations concentrate electromagnetic energy toward intended IoT nodes while suppressing signal leakage toward unintended directions. By spatially directing transmission beams, the low-altitude UAVs can achieve efficient WPT and covert communications from greater standoff distances, eliminating the need for close-proximity maneuvers and thereby substantially reducing propulsion energy consumption. Nevertheless, the beamforming performance of an antenna array is fundamentally determined by the spatial distribution of antenna elements. In light of this characteristic, movable antenna (MA) has recently emerged to provide enhanced flexibility \cite{Zhu2024}. Unlike traditional antenna arrays where the positions of antenna elements are permanently determined during manufacturing, MA enables real-time spatial reconfiguration by physically repositioning individual antenna elements within a designated movement region, thereby dynamically reshaping electromagnetic radiation patterns. 

\par However, designing and optimizing such a low-altitude UAV-carried MA-enhanced transmission system for joint WPT and covert communications presents several significant challenges. \textit{First}, such a system necessitates comprehensively considering the energy harvesting performance of ground IoT nodes, transmission performance of covert communication links, and propulsion energy consumption of the low-altitude UAV platform. \textit{Second}, inherent conflicts exist among these three aspects. For instance, allocating excessively high transmit power to energy beams can improve the harvested energy of IoT nodes, while it may reduce the achievable rate of the covert user. \textit{Finally}, optimization decisions at the current time instant directly influence future system states, thus requiring a balance between immediate effectiveness and long-term mission benefits. These challenges call for an intelligent sequential decision-making approach that can adaptively optimize multiple coupled optimization objectives while accounting for the temporal dependencies between current decisions and future system performance.

\par Accordingly, we introduce a novel online optimization approach to support the low-altitude
UAV-carried MA-enhanced transmission system for joint
WPT and covert communications. The main contributions of this work are outlined as follows:

\begin{itemize}
\item \textit{Low-altitude
UAV-Carried MA-Enhanced Transmission System for Joint WPT and Covert Communications:} We propose a low-altitude UAV-carried MA-enhanced transmission system joint WPT and covert communications that simultaneously delivers energy to IoT nodes and transmits sensitive information to the covert user. Specifically, the system utilizes WPT to serve as an active cover for covert communications by embedding the communication signal within the energy signal. Moreover, we integrate the MA into the low-altitude UAV to dynamically reconfigure the antenna array, thereby enhancing the energy transfer and decreasing information leakage. As far as we know, this is the first attempt to combine the MA with a low-altitude UAV to jointly support WPT and covert communications.

\item \textit{Multi-Objective Optimization Problem Formulation with Covert Requirements:} We analyze and derive the covert requirement of the considered system at each time slot, and formulate a multi-objective optimization problem that simultaneously maximizing the total harvested energy of all IoT nodes and sum achievable rate of the covert user, while minimizing the total propulsion energy consumption of the low-altitude UAV, by jointly optimizing the trajectory of the low-altitude UAV, precoding vectors and relative element positions of the low-altitude UAV-carried MA. This problem is non-convex with strong temporal coupling between decisions and constraints, thus making it particularly challenging to be solved efficiently.

\item \textit{Mixture-of-Experts (MoE)-Augmented Deep Reinforcement Learning (DRL) Algorithm:} We propose a mixture-of-experts-augmented soft actor-critic (MoE-SAC) algorithm within the framework of DRL to solve the formulated multi-objective optimization problem. Specifically, the MoE-SAC algorithm employs a sparse Top-K gated mixture-of-shallow-experts architecture to effectively represent the multimodal policy distributions, which enables the agent to better explore diverse strategies and balance conflicting objectives. Moreover, we incorporate an action projection module in MoE-SAC to explicitly enforce per-time-slot power budget and antenna position constraints, thereby ensuring the feasibility of the solutions and enhancing the convergence speed of the proposed algorithm.

\item \textit{Simulation and Performance Analysis:} We conduct extensive simulations to validate the proposed approach from both the system architecture and algorithmic perspectives, comparing its performance with several baselines. Simulation results validate the superiority of the proposed approach over existing methods. Moreover, the trajectory visualization results of the low-altitude UAV further explain how the algorithm balances the different optimization objectives and detours around the Warden to enhance the performance of the considered system.
\end{itemize}

\par The remainder of this paper is organized as follows. Section \ref{sec_related_work} reviews the related works, and Section \ref{sec_system_model} describes the system model. The covert requirement and the optimization problem are presented in Section \ref{sec_problem_formulation}. Section \ref{sec_moe_augmented_DRL_algorithm} introduces the proposed MoE-SAC algorithm, followed by simulation results in Section \ref{sec_performance_evaluation}. Finally, Section \ref{sec_conclusion} concludes the paper.

%
%
\section{Related Work}
\label{sec_related_work}

\par Existing research related to our work can be categorized into three main dimensions regarding low-altitude UAV-enabled WPT and covert communication system architectures, performance metrics in WPT and covert communications and optimization approaches.

\subsection{Low-altitude UAVs-enabled WPT and Covert Communication System Architectures}

\par Due to their agile maneuverability and operational flexibility, low-altitude UAVs have found extensive applications in both WPT and covert communications. In the context of WPT, Xu \textit{et al}. \cite{Xu2018} investigated that multiple UAVs equipped with energy transmitters cruise above the service region to cooperatively power spatially distributed ground energy receivers, thereby forming a multi-UAV air-to-ground WPT network. In \cite{Yuan2021}, the authors designed a low-altitude UAV platform equipped with a uniform linear phased array for analog beamforming, thus establishing a directional air-to-ground WPT link to distributed sensor nodes. In the realm of covert communications, Wang \textit{et al}. \cite{Wang2023} proposed a covert communication framework supported by a UAV-mounted intelligent reflecting surface, which dynamically shapes the wireless propagation environment to enable undetectable transmission despite the uncertain location of the Warden. Moreover, in \cite{Du2022}, the authors \textit{et al}. studied a jammer-aided covert communication system, where a multi-antenna UAV serves multiple ground users while a separate multi-antenna ground base station acts as a friendly jammer by injecting artificial interference to increase the detection uncertainty of the Warden. 

\par \textit{\textbf{Limitation 1:} All these architectures assume fixed antenna configurations and separately treat WPT and covert communications. These approaches fail to utilize WPT as an active cover for covert communications fully and do not exploit the spatial reconfigurability of antenna elements, thus fundamentally limiting the performance of the system.}

\subsection{Performance Metrics in WPT and Covert Communications}

\par Existing studies on WPT and covert communications are typically evaluated based on specific performance metrics aligned with their operational objectives. For example, in \cite{Dong2025}, the authors aimed to decrease the energy consumption of all low-altitude UAVs in large-scale IoT systems by jointly optimizing the number of UAVs and their trajectories. Moreover, Kim \textit{et al}. \cite{Kim2024} maximized the minimum uplink throughput of ground nodes by jointly optimizing the user scheduling, transmit power of ground nodes, and trajectories of UAVs,  subject to the energy constraint of the ground nodes. In \cite{Bhalerao2025}, the authors maximized the total harvested energy of sensor nodes by jointly optimizing the three-dimensional deployment of the aerial vehicle and energy allocation to spatially distributed nodes with heterogeneous energy demands. In terms of low-altitude UAVs for covert communications, Deng \textit{et al}. \cite{Deng2025} maximized the achievable covert rate for legitimate users by jointly optimizing the beamforming vectors and trajectories of the UAV under covert constraints dictated by multiple passive Wardens in an integrated sensing and communication network. Furthermore, in \cite{Huang2021}, the authors maximized the detection error probability of the Warden by jointly optimizing the transmit power and trajectory of the low-altitude UAV to establish the covert communication link. 

\par \textit{\textbf{Limitation 2:} Existing studies have predominantly addressed the harvested energy of IoT nodes, achievable rate of covert users and propulsion energy consumption of UAVs in isolation, thus lacking a unified optimization framework to balance these optimization objectives.}

\subsection{Optimization Approaches}

\par To solve the non-convex optimization problems in WPT and covert communications, current studies primarily rely on three categories of algorithms, i.e., convex approximation-based algorithms, heuristic algorithms and learning-based algorithms. For instance, in \cite{Hu2025}, the authors employed convex approximation techniques such as semidefinite relaxation and fractional programming within an alternating optimization framework to address the non-convex optimization problem in a UAV-enabled mobile edge computing system with simultaneous wireless information and power transfer. Moreover, Liu \textit{et al}. \cite{Liu2022} adopted a heuristic algorithm based on particle swarm for jointly optimizing the scheduling and trajectory design of the UAV in wireless rechargeable sensor networks. In \cite{Vo2025}, a deep neural network was trained to learn the relationship between environmental parameters and system configurations provided by offline optimization algorithms, thereby enabling rapid adaptation in dynamic UAV-assisted cognitive covert communication scenarios. More recently, Wang \textit{et al}. \cite{Wang2025a} leveraged a proximal policy optimization algorithm to jointly optimize the position and transmission parameters of the UAV in an uplink covert communication system where a covert IoT device coexists with multiple public IoT devices. 

\par \textit{\textbf{Limitation 3:} Although these algorithms demonstrate effectiveness in their respective scenarios, convex approximation-based and heuristic methods are computationally intensive and require frequent re-optimization for time-varying systems, whereas existing DRL algorithms suffer from limited policy expressiveness and fail to incorporate domain-specific structural constraints in continuous action spaces, thereby hindering their optimization performance.}

\par Different from existing studies, this paper addresses these limitations by considering a novel low-altitude UAV-carried MA-enhanced transmission system for joint WPT and covert communications, and developing an online optimization algorithm to augment the effectiveness of the considered system.

%
%
\section{System Model}
\label{sec_system_model}

\par In this section, we first introduce the considered low-altitude UAV-carried MA-enhanced transmission system for joint WPT and covert communications. Next, we provide a detailed description about the system model. Finally, we present the binary hypothesis testing procedure employed by the Warden.

\subsection{Scenario Description}

\begin{figure}[t]
	\centering
	\includegraphics[width=\linewidth]{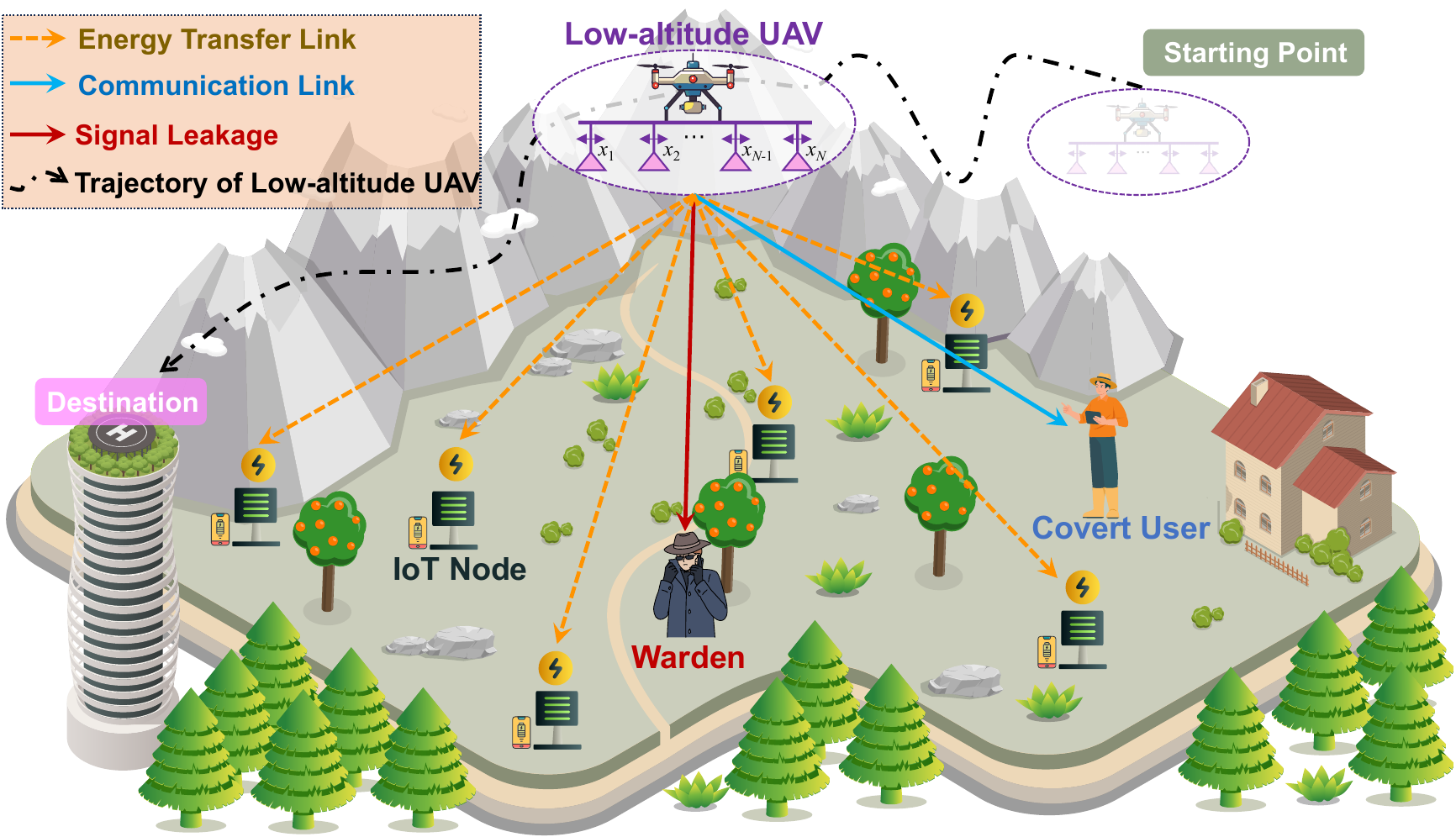}
	\caption{Illustration of the low-altitude UAV-carried MA-enhanced transmission system for joint WPT and covert communications, where the low-altitude UAV is equipped with an MA and performs WPT to IoT nodes to ensure their operational sustainability, while establishing a covert communication link with a covert user to report upon its status. Beyond powering the IoT nodes, the WPT also serves as an active cover to mask the covert transmission.}
	\label{Fig: System Model}
\end{figure}

\par As shown in Fig.~\ref{Fig: System Model}, we consider a low-altitude UAV-carried MA-enhanced transmission system for joint WPT and covert communications, where a low-altitude rotary-wing UAV $U$ is equipped with a linear MA array composed of $N$ antenna elements indexed by $\mathscr{N} = \{1, 2, \dots, N\}$. The low-altitude UAV simultaneously performs WPT to $M$ battery-constrained and low-power IoT nodes \cite{Xu2018}, indexed by $\mathscr{M} = \{1, 2, \dots, M\}$. Meanwhile, it establishes the covert communication links with a covert user $C$ to report upon its status or perceived environmental information in the presence of a ground Warden $W$ \cite{Huang2021}. In this setup, the battery-constrained IoT nodes rely on harvested radio-frequency energy for operational sustainability, while the covert user receives sensitive data under the cover provided by WPT, thereby making the Warden unable to distinguish the presence of covert communications.

\par For analytical tractability, the total mission duration $D$ is partitioned into $T$ consecutive and non-overlapping intervals of equal length $\delta_{t}$, yielding $D = T \delta_{t}$. Throughout the mission, the low-altitude UAV maintains a constant altitude $z_{\text{u}}$. As such, the resulting horizontal trajectory of the low-altitude UAV is captured by the coordinate sequence $\mathbf{q}_{\text{u}}(t) = [x_{\text{u}}(t), y_{\text{u}}(t), z_{\text{u}}]^{\text{T}}$, where the slot index $t$ ranges over $\mathscr{T} = \{1, \dots, T\}$. In accordance with practical deployment requirements \cite{Zhao2022}, the trajectory of the low-altitude UAV is constrained to originate at a fixed starting point $\mathbf{q}_{\text{u}}(1)=\mathbf{q}_{\text{s}}$ and terminate at a predetermined destination $\mathbf{q}_{\text{u}}(t)=\mathbf{q}_{\text{d}}$. Moreover, the $m$-th IoT node is located at $\mathbf{q}_{m} = (x_{m}, y_{m}, 0)$, while the covert user and Warden reside at $\mathbf{q}_{\text{c}} = [x_{\text{c}}, y_{\text{c}}, 0]^{\text{T}}$ and $\mathbf{q}_{\text{w}} = [x_{\text{w}}, y_{\text{w}}, 0]^{\text{T}}$, respectively.

\begin{remark}
This system architecture under consideration is particularly applicable to the mission-critical military scenarios. For example, the system supports energy-limited surveillance sensors via WPT while covertly transmitting battlefield intelligence in remote battlefields, thereby preventing hostile detection. Moreover, the low-altitude UAV is considered to operate at a fixed altitude and be capable of detecting the locations of the IoT nodes, covert user, and Warden via radar or camera, which is a widely adopted and realistic configuration in previous UAV trajectory optimization and covert communication studies \cite{Xie2025}, \cite{Xu2025}.
\end{remark}

\par In what follows, we introduce the comprehensive description for the channel model, communication model, WPT model, propulsion energy consumption model of the low-altitude UAV and binary hypothesis testing at the Warden.

\subsection{Channel Model}

\par Given the elevated operations of the low-altitude UAV and sparse ground obstacles, the air-to-ground channels can be viewed as LoS-dominated links \cite{Qian2025}. Accordingly, the path gain at time slot $t$ is characterized by using the free-space path loss channel, which can be expressed as follows:
\begin{equation}
    \label{Equ: path gain at time slot l}
    g_{i}(t) = \beta_{0}d_{\text{u},i}(t)^{-\alpha_{e}}, i \in \{m, \text{c}, \text{w}\}, 
\end{equation}

\noindent where $\beta_{0}$ represents the path loss at the reference distance of $1$ m, and $\alpha_{e}$ is the path loss exponent of air-to-ground channels. Moreover, $d_{\text{u},i}(t) = \| \mathbf{q}_{\text{u}}(t) - \mathbf{q}_{i} \|$ denotes the Euclidean distance between the low-altitude UAV and ground-level entity $i$ at time slot $t$. 

\par We consider the antenna elements of the low-altitude UAV-carried MA array arranged along the $x$-axis direction, and the position of the $n$-th antenna element relative to the low-altitude UAV is given by $x_{n}(t)$ at time slot $t$. Accordingly, the steering vector at time slot $t$ can be written as follows \cite{Liu2025}:
\begin{equation}
    \label{Equ: steering vector at time slot l}
    \begin{aligned}
        \mathbf{a}_{\text{u},i}(t) = \big[&e^{j\frac{2\pi}{\lambda}x_{1}(t)\cos\theta_{u,i}(t)}, e^{j\frac{2\pi}{\lambda}x_{1}(t)\cos\theta_{u,i}(t)}, \\
        &\dots, e^{j\frac{2\pi}{\lambda}x_{N}(t)\cos\theta_{u,i}(t)}\big]^{\text{T}}, i \in \{m, \text{c}, \text{w}\}
    \end{aligned}
\end{equation}

\noindent where $\theta_{\text{u},i} = \arcsin \left(\frac{z_{\text{u}}}{\|\textbf{q}_{\text{u}}(t)-\textbf{q}_{i}\|}\right)$ represents the steering angle towards the ground-level entity $i$, and $\lambda$ is the wavelength.

\par According to the descriptions above, the air-to-ground channel vector at time slot $t$ can be expressed as follows:

\begin{equation}
    \label{Equ: channel vector at time slot l}
    \mathbf{h}_{\text{u},i}(t) = \sqrt{g_{i}(t)} \, \mathbf{a}_{\text{u},i}(t), i \in \{m, \text{c}, \text{w}\}.
\end{equation}

\subsection{Communication and WPT Models}

\par Let $s_{\text{c}}(t) \sim \mathscr{CN}(0, 1)$ and $s_{\text{e}}(t)\sim \mathscr{CN}(0, 1)$ represent the normalized information signal and energy signal emitted by the low-altitude UAV at time slot $t$, respectively. Moreover, we denote $\mathbf{w}_{\text{c}}(t) \in \mathscr{C}^{N\times 1}$ and $\mathbf{w}_{\text{e}}(t) \in \mathscr{C}^{N\times 1}$ as precoding vectors for information and energy signals at time slot $t$, respectively. Therefore, the transmitted signal from the low-altitude UAV at time slot $t$ is described as follows:
\begin{equation}
    \label{Equ: transmitted signal}
    \textbf{s}_{\text{u}}(t) = \textbf{w}_{\text{c}}(t)s_{\text{c}}(t) + \textbf{w}_{\text{e}}(t)s_{\text{e}}(t).
\end{equation}

\par Consequently, the received signals of the covert user at time slot $t$ can be described as follows:
\begin{equation}
    \label{Equ: received signal of covert user}
    y_{\text{c}}(t) = \mathbf{h}_{\text{u},\text{c}}^{\text{H}}(t)\mathbf{s}_{\text{u}}(t) + n_{\text{c}}(t),
\end{equation}

\noindent where $n_{\text{c}}(t) \sim \mathscr{CN}(0, \sigma_{\text{c}}^{2})$ is the additive white Gaussian noise at the covert user. Thus, the achievable rate of the covert user at time slot $t$ can be calculated as follows:
\begin{equation}
    \label{Equ: achievable rate of covert user}
    R_{\text{c}}(t) = \log_{2}\left(1 + \frac{\left|\mathbf{h}_{\text{u},\text{c}}^{\text{H}}(t)\mathbf{w}_{\text{c}}(t)\right|^{2}}{\left|\mathbf{h}_{\text{u},\text{c}}^{\text{H}}(t)\mathbf{w}_{\text{e}}(t)\right|^{2}+\sigma_{c}^{2}}\right).
\end{equation}

\par Similarly, the received signals of the $m$-th energy-constrained IoT node can be written as follows:
\begin{equation}
    \label{Equ: received signal of IoT m}
    y_{m}(t) = \mathbf{h}_{\text{u},m}^{\text{H}}(t)\mathbf{s}_{\text{u}}(t) + n_{m}(t),
\end{equation}

\noindent where $n_{m}(t) \sim \mathscr{CN}(0, \sigma_{m}^{2})$ represents the additive white Gaussian noise at the $m$-th IoT node. As the noise power is much weaker than the signal power, it can be ignored in the energy harvesting process \cite{Zhou2024}. In practice, the energy receiver converts radio frequency power into direct-current power for charging the battery. Despite the non-linear nature of the conversion, the harvested direct-current power grows monotonically with the received radio frequency power. Thus, the linear energy harvesting model is adopted to characterize the harvesting process of the $m$-th IoT node at time slot $t$, which can be represented as follows:
\begin{equation}
    \label{Equ: linear EH model}
    E_{m}(t) = \eta \delta_{t}\left(|\mathbf{h}_{\text{u},m}^{\text{H}}(t)\textbf{w}_{\text{c}}(t)|^{2} + |\mathbf{h}_{\text{u},m}^{\text{H}}(t)\textbf{w}_{\text{e}}(t)|^{2}\right),
\end{equation}

\noindent where $\eta \in (0, 1]$  is the energy harvesting coefficient.

\subsection{Low-altitude UAV Propulsion Energy Consumption Model}

\par The propulsion energy consumption of the low-altitude rotary-wing UAV primarily depends on its flight dynamics. Following the widely adopted model in \cite{Zeng2017}, the propulsion energy consumption of the low-altitude UAV at time slot $t$ can be calculated as follows:
\begin{equation}
    \label{Equ: propulsion energy}
    \begin{aligned}
        E_{\text{u}}(t) = &\delta_{t}\Bigg(P_{i}\left(\sqrt{1+\frac{\|\mathbf{v}(t)\|^{4}}{4v_{0}^{4}}} -  \frac{\|\mathbf{v}(t)\|^{2}}{2v_{0}^{2}}\right)^{1/2} \\
        & P_{0}\left(1+\frac{3\|\mathbf{v}(t)\|^{2}}{U_{\text{tip}}^{2}}\right)
        + \frac{1}{2}d_{0}\rho s A \|\mathbf{v}(t)\|^{3}\Bigg),
    \end{aligned}
\end{equation}

\noindent where $P_{i}$ and $P_{0}$ denote the induced power and blade profile power in hovering, respectively, and $\mathbf{v}(t) = \frac{\mathbf{q}_{\text{u}}(k+1)-\mathbf{q}_{\text{u}}(t)}{\delta_{t}}$ represents the velocity of the low-altitude UAV at time slot $t$. Moreover, $U_{\text{tip}}$ is the tip speed of the rotor blade, $v_{0}$ is the mean rotor induced velocity in hovering, $d_{0}$ denotes the fuselage drag ratio, $\rho$ represents the air density, $s$ is the rotor solidity, and $A$ is the rotor disc area. 

\subsection{Binary Hypothesis Testing at Warden}

\par We denote the null hypothesis $\mathscr{H}_{0}$ and alternative hypothesis $\mathscr{H}_{1}$ as the two circumstances that only energy signals are transmitted and information signals alongside energy signals are transmitted, respectively. Specifically, the Warden determines whether the low-altitude UAV is transmitting information signals to the covert user by detecting differences in the received signal power under $\mathscr{H}_{0}$ and $\mathscr{H}_{1}$. Accordingly, the received signals of the Warden at time slot $t$ can be expressed as follows:
\begin{equation}
    \label{Equ: received signal power of the Warden}
    	y_{\text{w}}(t)=
    \begin{cases}
    	\begin{aligned}
    		&\mathbf{h}_{\text{w}}^{\text{H}}(t)\big(\textbf{w}_{\text{c}}(t)s_{\text{c}}(t) + \textbf{w}_{\text{e}}(t)s_{\text{e}}(t)\big) +  n_{\text{w}}(t),\\
    	\end{aligned} &  \mathcal{H}_{1},\\
    	\begin{aligned}
    		& \mathbf{h}_{\text{w}}^{\text{H}}(t) \textbf{w}_{\text{e}}(t)s_{\text{e}}(t) +  n_{\text{w}}(t),
    	\end{aligned} & \mathcal{H}_{0}, \\
	\end{cases}
\end{equation}

\noindent where $n_{\text{w}}(t)\sim \mathscr{CN}(0, \sigma_{\text{w}}^{2})$ is the additive white Gaussian noise at the Warden. Therefore, the average received power of the Warden at time slot $t$ can be described as follows:
\begin{equation}
    \label{Equ: average received power of the Warden}
    \mathbb{E}(|y_{\text{w}}(t)|^{2}) =
    \begin{cases}
    \begin{aligned}
    		&\mathbf{h}_{\text{w}}^{\text{H}}(t)\big(\textbf{W}_{\text{c}}(t)+ \textbf{W}_{\text{e}}(t)\big)\mathbf{h}_{\text{w}}(t) +  \sigma_{\text{w}}^{2},\\
    	\end{aligned} &  \mathcal{H}_{1},\\
    	\begin{aligned}
    		& \mathbf{h}_{\text{w}}^{\text{H}}(t) \textbf{W}_{\text{e}}(t)\mathbf{h}_{\text{w}}(t) + \sigma_{\text{w}}^{2},
    	\end{aligned} & \mathcal{H}_{0}, \\
    \end{cases}
\end{equation}

\noindent where $\textbf{W}_{\text{c}}(t) \triangleq \textbf{w}_{\text{c}}(t)\textbf{w}_{\text{c}}^{\text{H}}(t)$ and $\textbf{W}_{\text{e}}(t) \triangleq \textbf{w}_{\text{e}}(t)\textbf{w}_{\text{e}}^{\text{H}}(t)$.

\par In the binary testing problem, the decision rule of the Warden can be expressed as follows \cite{Wu2024}:
\begin{equation}
	\label{Equ: decision_rule}
\begin{aligned}
	\mathbb{E}(|y_{\text{w}}(t)|^{2}) \mathop{\gtrless}_{\mathcal{D}_{0}}^{\mathcal{D}_{1}}\tau(t),
\end{aligned}
\end{equation}
\noindent where $\tau(t)$ represents the decision threshold of the Warden at time slot $t$. Moreover, $\mathcal{D}_{0}$ and $\mathcal{D}_{1}$ are the decision results while supporting $\mathcal{H}_{0}$ and $\mathcal{H}_{1}$, respectively.

\par As such, we can observe that optimizing the trajectory of the low-altitude UAV, precoding vectors and antenna element positions of the low-altitude UAV-carried MA array can not only improve both the achievable rate of the covert user and harvesting energy of energy-constrained IoT nodes, but also mitigate the risk of detection by the Warden. 

%
%
\section{Problem Formulation}
\label{sec_problem_formulation}

\par In this section, we first derive the covert requirement of the considered low-altitude UAV-carried MA-enhanced transmission system for joint WPT and
covert communications at each time slot. Then, we formulate and analyze the optimization problem.

\subsection{Covert Requirement}

\par In this paper, we aim to investigate the worst-case situation where the Warden uses the optimal decision threshold to minimize the detection error probability.

\begin{proposition}
\label{lemma:1}
\par The optimal decision threshold for the Warden at time slot $t$ is determined as follows:
\begin{equation}
    \label{Equ: optimal decision threshold}
	\tau^{*}(t) = \zeta_{0}(t) \frac{1+\varrho(t)}{\varrho(t)}\ln\big(1+\varrho(t)\big),
\end{equation}

\noindent where $\zeta_{0}(t)= \mathbf{h}_{\mathrm{w}}^{\mathrm{H}}(t) \mathbf{W}_{\mathrm{e}}(t)\mathbf{h}_{\mathrm{w}}(t) + \sigma_{\mathrm{w}}^{2}$. Moreover, $\varrho(t) = \frac{\zeta_{1}(t) - \zeta_{0}(t)}{\zeta_{0}(t)}$ wherein $\zeta_{1}(t)$ is calculated by $ \mathbf{h}_{\mathrm{w}}^{\mathrm{H}}(t)\big(\mathbf{W}_{\mathrm{c}}(t)+ \mathbf{W}_{\mathrm{e}}(t)\big)\mathbf{h}_{\mathrm{w}}(t) +  \sigma_{\mathrm{w}}^{2}$.
\end{proposition}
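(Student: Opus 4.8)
The plan is to model the worst-case Warden as an optimal radiometer and minimize its detection error probability over the threshold $\tau(t)$. First I would observe that under either hypothesis the received sample $y_{\text{w}}(t)$ in \eqref{Equ: received signal power of the Warden} is a zero-mean circularly-symmetric complex Gaussian variable, being a linear combination of the Gaussian information/energy symbols and the Gaussian noise $n_{\text{w}}(t)$. Consequently its instantaneous power $|y_{\text{w}}(t)|^2$ — the statistic compared against $\tau(t)$ in the decision rule \eqref{Equ: decision_rule} — is exponentially distributed, with mean $\zeta_0(t)$ under $\mathcal{H}_0$ and mean $\zeta_1(t)$ under $\mathcal{H}_1$, precisely the average powers already computed in \eqref{Equ: average received power of the Warden}.

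With the exponential model in place, the false-alarm and missed-detection probabilities follow by integrating the two densities against the threshold rule, giving $P_{\text{FA}}(t) = \Pr(\mathcal{D}_1 \mid \mathcal{H}_0) = e^{-\tau(t)/\zeta_0(t)}$ and $P_{\text{MD}}(t) = \Pr(\mathcal{D}_0 \mid \mathcal{H}_1) = 1 - e^{-\tau(t)/\zeta_1(t)}$. Summing yields the detection error probability $\xi(t) = P_{\text{FA}}(t) + P_{\text{MD}}(t)$, which the worst-case Warden minimizes over $\tau(t)$. Differentiating with respect to $\tau(t)$ and setting the result to zero produces the stationarity condition $\frac{1}{\zeta_0(t)} e^{-\tau/\zeta_0(t)} = \frac{1}{\zeta_1(t)} e^{-\tau/\zeta_1(t)}$; taking logarithms linearizes this into an equation that solves explicitly as $\tau^*(t) = \frac{\zeta_0(t)\zeta_1(t)}{\zeta_1(t)-\zeta_0(t)} \ln\frac{\zeta_1(t)}{\zeta_0(t)}$. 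Substituting the definition $\varrho(t) = (\zeta_1(t)-\zeta_0(t))/\zeta_0(t)$, equivalently $\zeta_1(t) = (1+\varrho(t))\zeta_0(t)$, recasts this directly into the claimed form \eqref{Equ: optimal decision threshold}.

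Two points need care. First, I must confirm $\zeta_1(t) > \zeta_0(t)$ — equivalently $\varrho(t) > 0$ — so that the logarithm and the division are well defined; this holds because $\mathbf{W}_{\text{c}}(t) = \mathbf{w}_{\text{c}}(t)\mathbf{w}_{\text{c}}^{\text{H}}(t)$ is positive semidefinite, so $\mathbf{h}_{\text{w}}^{\text{H}}(t)\mathbf{W}_{\text{c}}(t)\mathbf{h}_{\text{w}}(t) \ge 0$ is exactly the nonnegative gap $\zeta_1(t) - \zeta_0(t)$ between the two average powers. Second, I would verify that the stationary point is a genuine minimizer: evaluating the second derivative at $\tau^*(t)$ and using the stationarity relation gives $\xi''(t) = c\big(\frac{1}{\zeta_0(t)} - \frac{1}{\zeta_1(t)}\big) > 0$ for some $c>0$, confirming a minimum; alternatively, since $\xi(t) \to 1$ as $\tau \to 0^+$ and as $\tau \to \infty$, the unique interior critical point must be the minimizer.

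I expect the only non-mechanical parts to be the justification of the exponential test-statistic model (which rests on the complex-Gaussian structure of $y_{\text{w}}(t)$) and the verification that the stationary point is a minimum rather than a maximum. The passage from the stationarity condition to the closed-form threshold, and the change of variable to $\varrho(t)$, are routine algebra.
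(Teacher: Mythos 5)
Your proposal is correct and follows essentially the same route as the paper: model $|y_{\text{w}}(t)|^2$ as exponential with means $\zeta_0(t)$ and $\zeta_1(t)$, form $\xi(t)=P_{\text{FA}}(t)+P_{\text{MD}}(t)$, and solve the first-order condition, then change variables to $\varrho(t)$. Your added checks that $\varrho(t)>0$ (via positive semidefiniteness of $\mathbf{W}_{\text{c}}(t)$) and that the stationary point is a minimum are welcome rigor the paper glosses over with a bare convexity claim.
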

\begin{proof}
Since $y_{\text{w}}(t)$ follows a Gaussian distribution under the two hypotheses, $p_{w}(t)$ follows exponential distributions with means $\zeta_{0}(t)$ and $\zeta_{1}(t)$, respectively. Therefore, the false alarm probability which declares declaring $\mathscr{H}_{1}$ when $\mathscr{H}_{0}$ is true can be expressed as follows:
\begin{equation}
    P_{\mathrm{FA}}(t) = \Pr(\mathscr{D}_{1}|\mathscr{H}_{0})
    = \int_{\tau(t)}^{\infty} \frac{1}{\zeta_{0}(t)} e^{-\frac{y}{\zeta_{0}(t)}} \mathrm{d}y
    = e^{-\frac{\tau(t)}{\zeta_{0}(t)}}.
\end{equation}

\par Similarly, the miss-detection probability which declares $\mathscr{H}_{0}$ when $\mathscr{H}_{1}$ is true can be respectively expressed as follows:
\begin{equation}
    P_{\mathrm{MD}}(t) = \Pr(\mathscr{D}_{0}|\mathscr{H}_{1})
    = \int_{0}^{\tau(t)} \frac{1}{\zeta_{1}(t)} e^{-\frac{y}{\zeta_{1}(t)}} \mathrm{d}y
    = 1 - e^{-\frac{\tau(t)}{\zeta_{1}(t)}}.
\end{equation}

\par Then, the total detection error probability of the Warden is described as follows:
\begin{equation}
    \label{Equ: P_FA+P_MD}
    \xi(t) = P_{\mathrm{FA}}(t) + P_{\mathrm{MD}}(t)
    = 1 + e^{-\frac{\tau(t)}{\zeta_{0}(t)}} - e^{-\frac{\tau(t)}{\zeta_{1}(t)}}.
\end{equation}

\par Note that $\xi(t)$ is convex regarding $\tau(t)$.The optimal threshold $\tau^{*}(t)$ minimizing $\xi(t)$ can be obtained by differentiating $\xi(t)$ with respect to $\tau(t)$ and setting the derivative equal to zero. Then, the optimal decision threshold in Eq. \eqref{Equ: optimal decision threshold} can be obtained.
\end{proof}

\begin{proposition}
\label{lemma:2}
The corresponding minimum detection error probability for the Warden is expressed as follows: 
\begin{equation}
    \label{Equ: minimal_DEP}
	\xi^{*}(t) = 1 + e^{-\frac{1+\varrho(t)}{\varrho(t)}\ln\big(1+\varrho(t)\big)}-e^{-\frac{1}{\varrho(t)}\ln\big(1+\varrho(t)\big)}.
\end{equation}
\end{proposition}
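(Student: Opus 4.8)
The plan is to substitute the optimal threshold $\tau^{*}(t)$ obtained in Proposition \ref{lemma:1} directly into the expression for the total detection error probability $\xi(t)$ derived in its proof, so that $\xi^{*}(t) = \xi(t)\big|_{\tau(t)=\tau^{*}(t)}$. Recall from Eq. \eqref{Equ: P_FA+P_MD} that
\begin{equation}
    \xi(t) = 1 + e^{-\frac{\tau(t)}{\zeta_{0}(t)}} - e^{-\frac{\tau(t)}{\zeta_{1}(t)}}.
\end{equation}
The key algebraic observation is that the definition $\varrho(t) = \frac{\zeta_{1}(t)-\zeta_{0}(t)}{\zeta_{0}(t)}$ rearranges to the compact ratio $\zeta_{1}(t) = \big(1+\varrho(t)\big)\zeta_{0}(t)$, equivalently $1+\varrho(t) = \zeta_{1}(t)/\zeta_{0}(t)$. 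This single identity drives both simplifications.

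First I would evaluate the exponent in the $P_{\mathrm{FA}}$ term. Substituting $\tau^{*}(t) = \zeta_{0}(t)\frac{1+\varrho(t)}{\varrho(t)}\ln\big(1+\varrho(t)\big)$ gives $\frac{\tau^{*}(t)}{\zeta_{0}(t)} = \frac{1+\varrho(t)}{\varrho(t)}\ln\big(1+\varrho(t)\big)$, where the prefactor $\zeta_{0}(t)$ cancels cleanly. Second, for the $P_{\mathrm{MD}}$ term I would use the ratio identity to write $\frac{\tau^{*}(t)}{\zeta_{1}(t)} = \frac{\zeta_{0}(t)}{\zeta_{1}(t)}\cdot\frac{1+\varrho(t)}{\varrho(t)}\ln\big(1+\varrho(t)\big) = \frac{1}{1+\varrho(t)}\cdot\frac{1+\varrho(t)}{\varrho(t)}\ln\big(1+\varrho(t)\big) = \frac{1}{\varrho(t)}\ln\big(1+\varrho(t)\big)$. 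Collecting the two resulting exponential terms back into $\xi(t)$ yields Eq. \eqref{Equ: minimal_DEP} immediately.

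Since both substitutions reduce to elementary cancellations once the ratio $1+\varrho(t) = \zeta_{1}(t)/\zeta_{0}(t)$ is recognized, I do not anticipate any genuine obstacle; the result follows by direct algebra and inherits its optimality from the convexity argument and first-order condition already established in Proposition \ref{lemma:1}. The only point requiring minor care is ensuring $\varrho(t) > 0$ so that $\ln\big(1+\varrho(t)\big)$ and the threshold remain well-defined and positive. This holds because $\zeta_{1}(t) - \zeta_{0}(t) = |\mathbf{h}_{\mathrm{w}}^{\mathrm{H}}(t)\mathbf{w}_{\mathrm{c}}(t)|^{2} \geq 0$, which is strictly positive whenever the information precoder is not orthogonal to the Warden's channel, i.e., precisely when covert information actually reaches the Warden with nonzero effective power.
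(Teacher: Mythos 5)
Your proposal is correct and follows exactly the paper's route: the paper's own proof is simply ``substitute Eq.~\eqref{Equ: optimal decision threshold} into Eq.~\eqref{Equ: P_FA+P_MD},'' and you carry out precisely that substitution, with the cancellation via $1+\varrho(t)=\zeta_{1}(t)/\zeta_{0}(t)$ made explicit. Your added remark on $\varrho(t)>0$ is a sensible bit of extra care that the paper omits, but it does not change the argument.
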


\begin{proof}
Based on Lemma \ref{lemma:1}, the minimum detection error probability for the Warden can be calculated by substituting Eq. \eqref{Equ: optimal decision threshold} into Eq. \eqref{Equ: P_FA+P_MD}.
\end{proof}

\par Following Lemma \ref{lemma:2}, we can impose a covert requirement $\xi^{*}(t) \geq 1 - \xi$ to guarantee that the covert transmission remains undetectable under the worst-case scenario where $\xi$ is a small value to represent the level of the required covert communications.

\subsection{Problem Formulation}

\par In this work, we aim to maximize the total harvested energy of all IoT nodes and sum achievable rate of the covert user, while simultaneously minimizing the total propulsion energy consumption of the low-altitude UAV.

\par To this end, we introduce four sets of decision variables, defined as follows. \textit{(i)} $\mathscr{V} = \{\textbf{v}(t)| t \in \mathscr{T}\}$, representing the velocity control of the low-altitude UAV over time slots. \textit{(ii)} $\mathscr{W}_{\text{c}} = \{\textbf{w}_{\text{c}}(t)| t \in \mathscr{T}\}$, describing the precoding vectors for covert information transmission over time slots. \textit{(iii)} $\mathscr{W}_{\text{e}} = \{\textbf{w}_{\text{e}}(t)| t \in \mathscr{T}\}$, specifying the precoding vectors for WPT over time slots. \textit{(iv)} $\mathscr{X} = \{x_{n}(t)| n \in \mathscr{N}, t \in \mathscr{T}\}$, indicating the relative positions of antenna elements relative to the low-altitude UAV over time slots.




\par According to the abovementioned objectives and definition of decision variables, the multi-objective optimization problem of the considered low-altitude UAV-carried MA-enhanced transmission system for joint WPT and covert communications can be formulated as follows:
\begin{subequations}
    \label{optimization_problem}
\begin{align}
    &\mathop{\operatorname{max}}\limits_{ \{\mathscr{V}, \mathscr{W}_{\text{c}}, \mathscr{W}_{\text{e}}, \mathscr{X}\}} F=\{\sum_{t=1}^{T}\sum_{m=1}^{M} E_{m}(t), \sum_{t=1}^{T} R_{\text{c}}(t), -\sum_{t=1}^{T} E_{\text{u}}(t)\}  \label{optimization_problem_objective}  \\
	&~~~~~~\textrm{s.t.}  \quad~~~\xi^{*}(t) \geq 1 - \xi, \forall t \in \mathscr{T}, \label{optimization_problem_constraint_1}\\ 
    &~~~~~~~~~~~~~~~ \Vert\mathbf{v}(t)\Vert \leq \text{v}_{\text{max}}, \forall t \in \mathscr{T}, \label{optimization_problem_constraint_2} \\
    &~~~~~~~~~~~~~~~ \text{x}_{\text{min}} \leq x_{\text{u}}(t) \leq \text{x}_{\text{max}}, \forall t \in \mathscr{T}, \label{optimization_problem_constraint_3} \\
    &~~~~~~~~~~~~~~~ \text{y}_{\text{min}} \leq y_{\text{u}}(t) \leq \text{y}_{\text{max}}, \forall t \in \mathscr{T}, \label{optimization_problem_constraint_4} \\
    &~~~~~~~~~~~~~~~ \mathbf{q}_{\text{u}}(1) = \mathbf{q}_{\text{s}},~~ \mathbf{q}_{\text{u}}(t) = \mathbf{q}_{\text{d}},\label{optimization_problem_constraint_5} \\
    &~~~~~~~~~~~~~~~ \Vert\mathbf{w}_{\text{c}}(t)\Vert^{2} + \Vert\mathbf{w}_{\text{e}}(t)\Vert^{2} \leq \text{p}_{max}, \forall t \in \mathscr{T}, \label{optimization_problem_constraint_6}\\
    &~~~~~~~~~~~~~~~ x_{n}(t) \in [0, L], \forall n \in \mathscr{N}, \forall t \in \mathscr{T}, \label{optimization_problem_constraint_7} \\
    &~~~~~~~~~~~~~~~ x_{i}(t) - x_{j}(t) \geq \frac{\lambda}{2}, \forall i,j \in \mathscr{N}, i\neq j, \forall t \in \mathscr{T},\label{optimization_problem_constraint_8}
\end{align} 
\end{subequations}
\noindent where constraint \eqref{optimization_problem_constraint_1} ensures that the covert requirement of the communication process is satisfied, while constraint \eqref{optimization_problem_constraint_2} limits the velocity of the low-altitude UAV to its maximum allowable velocity. Moreover, constraints \eqref{optimization_problem_constraint_3} and \eqref{optimization_problem_constraint_4} confine the low-altitude UAV within the predefined horizontal flight area, and constraint \eqref{optimization_problem_constraint_5} fixes its initial and final positions. In addition, constraint \eqref{optimization_problem_constraint_6} restricts the total transmit power of the combined covert information and energy signals to the maximum allowable value, and constraint \eqref{optimization_problem_constraint_7} ensures that the antenna elements of the low altitude UAV-carried MA array remain within the length of the antenna array. Furthermore, constraint \eqref{optimization_problem_constraint_8} guarantees the minimum spacing between antenna elements to avoid mutual coupling effects.

\par We analyze the characteristics of the above-formulated optimization problem as follows. 
\begin{itemize}
    \item \textit{Trade-offs among Optimization Objectives:} For example, increasing the steering energy beams toward IoT nodes may improve the harvested energy accumulated by IoT nodes, however, this also simultaneously reduces the achievable rate of the covert user. Similarly, shortening the trajectory of the low-altitude UAV to save propulsion energy may limit the harvesting energy of IoT nodes or degrade the achievable rate of the covert user.

    \item \textit{Non-convex of Optimization Problem:} Constraints \eqref{optimization_problem_constraint_1} and \eqref{optimization_problem_constraint_8} are also non-convex, which introduces additional difficulties in identifying feasible solutions.

    \item \textit{Long-term Optimization:} For the formulated optimization problem, the position of the low-altitude UAV at one time slot affects the feasible decisions and performance outcomes in subsequent slots. 
\end{itemize}

\par In such cases, the optimization problem is challenging to solve analytically by using traditional methods, i.e., convex optimization. Therefore, this motivates us to design a novel DRL-based method, which is well-suited for sequential decision-making problems characterized by high-dimensional decision spaces, non-convex constraints and temporally dependent dynamics.

\section{MoE-Augmented DRL-based Algorithm}
\label{sec_moe_augmented_DRL_algorithm}	

\par In this section, the considered optimization problem is reformulated as a Markov decision process (MDP). Then, we introduce the proposed MoE-augmented DRL-based algorithm. Finally, we give the complexity of the designed algorithm.

\subsection{MDP Reformulation}

\par In a typical DRL-based framework, a decision-maker referred to as the agent interacts with the external environment through the sequential selection and execution of actions. Formally, the sequential decision-making process is formulated as an MDP, which is defined by the 5-tuple $\Omega = <\mathscr{S}, \mathscr{A}, R, P, \gamma>$ \cite{Li2025}. At each time step $t$\footnote{To keep the symbols brief, the time slot index $t$ defined in parentheses in the system model will be uniformly represented by the subscript $t$ in the subsequent algorithm description section.}, the agent observes the current state $s_{t} \in \mathcal{S}$ of the external environment and responds by selecting an action $a_{t} \in \mathcal{A}$ based on its policy $\pi(a_{t} | s_{t})$. The environment then transforms to the next state $s_{t+1}$ according to the transition probability $P(s_{t+1} | s_{t}, a_{t})$ and returns a scalar reward $r_{t} = R(s_{t}, a_{t})$ to the agent. Moreover, the discount factor $\gamma \in (0,1]$ ensures that future rewards are appropriately weighted, thereby allowing the agent to pursue a policy that maximizes the long-term performance rather than myopic gains. In the following, we unfold how each element of the MDP is instantiated in our considered optimization problem.

\subsubsection{State Space}

\par The state space provides the agent with a comprehensive representation of the current circumstance of the environment. In the considered low-altitude UAV-carried MA-enhanced transmission system for joint WPT and covert communications, the state can be expressed as follows:
\begin{equation}
    \label{Equ: state space}
    \begin{aligned}
        s_{t} = \Big\{&t, \Re(\mathbf{H}_{t}), \Im(\mathbf{H}_{t}), \mathbf{d}_{t}^{\text{u,d}}, \mathbf{d}_{t}^{\text{u,c}}, \mathbf{d}_{t}^{\text{u,w}},\\
    & \mathbf{d}_{t}^{\text{u,} m}, \delta_{t-1}^{n} \Big| m \in \mathscr{M}, n \in \mathscr{N} \setminus \{N\}\Big\},
    \end{aligned}
\end{equation}

\noindent where $t$ denotes the current time slot index to indicate the progress of the mission and $\mathbf{H}_{t} = \{\mathbf{h}_{\text{u,c}}(t), \mathbf{h}_{\text{u},m}(t) | m \in \mathscr{M}\}$ represents the complex-valued channel matrix at time slot $t$. Since neural networks operate on real numbers, $\mathbf{H}_{t}$ is decomposed into its real part $\Re(\mathbf{H}_{t})$ and imaginary part $\Im(\mathbf{H}_{t})$, which are concatenated into a single real-valued vector as input. Moreover, $\mathbf{d}_{t}^{\text{u,d}}$, $\mathbf{d}_{t}^{\text{u,c}}$, $\mathbf{d}_{t}^{\text{u,w}}$ and $\mathbf{d}_{t}^{\text{u,}m}$ are distance vectors between the low-altitude UAV and its destination, covert user, Warden and IoT nodes alongside $x$-axis and $y$-axis, respectively. Furthermore, $\delta_{t-1}^{n}$ is the antenna spacing between the $n$-th and $(n+1)$-th movable antennas of the low-altitude UAV-carried MA array at time slot $t-1$.

\subsubsection{Action Space}

\par The action space reflects the set of controllable optimization variables that the agent can adjust at each time step to influence the system performance. As such, the action contains three parts that align with the decision variables related to the formulated optimization problem, which can be represented as follows:
\begin{equation}
     \label{Equ: action space}
     a_{t} = \big\{\mathbf{v}_{t}, \mathbf{w}_{t}^{\text{c}}, \mathbf{w}_{t}^{\text{e}}, x_{t}^{n}\big|n\in\mathscr{N}\big\},
\end{equation}

\noindent where $\mathbf{v}_{t}$ denotes the velocity of the low-altitude UAV at time slot $t$. Moreover, $\mathbf{w}_{t}^{\text{c}}$ and $\mathbf{w}_{t}^{\text{e}}$ represent the precoding vectors for covert communication and WPT, respectively, at time slot $t$, and $x_{t}^{n}$ represents the relative positions of the antenna elements at time slot $t$.

\subsubsection{Reward Function}

\par The reward signal is utilized to provide feedback that guides the policy toward achieving the optimal objective. Thus, we include the total harvested energy of all IoT nodes, achievable rate of the covert user and propulsion energy consumption at each time slot, which can be expressed as follows:
\begin{equation}
    r_{t}^{1} = w_{1} \sum_{m=1}^{M} E_{m}(t) + w_{2} R_{c}(t) - w_{3}E_{u}(t),
\end{equation}

\noindent where $w_{1}$, $w_{2}$ and $w_{3}$ are the weights that balance the trade-offs among three objectives. In addition, to handle the flight boundary constraints of the low-altitude UAV and the covert communication constraint, we introduce penalty terms $p_{1}$ and $p_{2}$ to discourage infeasible actions, respectively. Thus, the constraint penalty can be represented as follows:
\begin{equation}
    \begin{aligned}
    r_{t}^{2} =  & p_{1}\cdot\mathbb{I}\big(x_{\text{u}}(t) \notin [\text{x}_{\text{min}}, \text{x}_{\text{max}}] \lor y_{\text{u}}(t) \notin [\text{y}_{\text{min}}, \text{y}_{\text{max}}]\big) \\
    &+ p_{2}\cdot\mathbb{I}(\xi^{*}(t) < 1-\xi),
    \end{aligned}
\end{equation}

\noindent where $\mathbb{I}(\cdot)$ equals $1$ when its argument is true, and $0$ otherwise. Moreover, we incorporate a sparse terminal reward at time slot $T$ for the low-altitude UAV achieving the designated target position at the end of the mission, which can be written as follows:
\begin{equation}
    r_{T}^{3} = \begin{cases}
    \begin{aligned}
    		& r_{d},\\
    	\end{aligned} &  \mathbf{q}_{\text{u}}(T) = \mathbf{q}_{\text{d}},\\
    	\begin{aligned}
    		& -r_{d},
    	\end{aligned} & \mathbf{q}_{\text{u}}(T) \neq \mathbf{q}_{\text{d}}, \\
    \end{cases}
\end{equation}

\noindent where $r_{d}>0$ is a positive constant denoting the bonus for successfully reaching the destination. However, relying solely on such a sparse terminal reward may lead to inefficient exploration and slow convergence since the agent only receives meaningful feedback at the end of the episode. Thus, we adopt the reward shaping to address this challenge by providing additional intermediate signals that guide the low-altitude UAV toward the desired destination more effectively, which can be represented as follows:
\begin{equation}
    r_{t}^{4} = w_{4} \left(\left\Vert\mathbf{d}_{t-1}^{\text{u,d}}\right\Vert-\left\Vert\mathbf{d}_{t}^{\text{u,d}}\right\Vert\right),
\end{equation}

\noindent where $w_{4}>0$ is a shaping coefficient that balances this term against other reward components. 

\par Accordingly, the reward signal at time slot $t$ can be expressed as follows:
\begin{equation}
    r_{t} = r_{t}^{1} + r_{t}^{2} + r_{T}^{3} + r_{t}^{4}.
\end{equation}

\begin{remark}
Penalty terms are commonly employed to ensure the feasibility of solutions. However, they often lead to sparse rewards and increased variance \cite{Sutton1998}. Therefore, we apply penalty terms only to the complex constraints \eqref{optimization_problem_constraint_1}, \eqref{optimization_problem_constraint_3}, \eqref{optimization_problem_constraint_4} and \eqref{optimization_problem_constraint_5}, while the simpler constraints \eqref{optimization_problem_constraint_6}, \eqref{optimization_problem_constraint_7} and \eqref{optimization_problem_constraint_8} are directly enforced through the action projection module, which will be described in detail later.
\end{remark}
\par With the reformulated MDP, the original optimization problem is converted into a sequential decision-making task that can be addressed by the DRL-based framework. In the following, we design the specific algorithm to solve the sequential decision-making task related to our considered low-altitude UAV-carried MA-enhanced transmission system for joint WPT and covert
communications.

\subsection{Standard Soft Actor-critic (SAC) Algorithm}

\par SAC algorithm is one of the most widely adopted DRL algorithms for continuous control tasks \cite{Haarnoja2018}. Owing to its maximum entropy reinforcement learning framework, SAC offers superior robustness, sample efficiency and training stability compared to other DRL algorithms. By adding an entropy term into the objective, SAC is designed to encourage exploration and avoid early convergence to suboptimal deterministic policies. Formally, SAC aims to find an optimal policy $\pi_{\theta}^*$ that can maximize both the expected return and entropy of the policy, which is expressed as follows:
\begin{equation}
    \pi^{*} = \arg\max_{\pi}\sum_{t=0}^{T}\mathbb{E}
    \Big[ r(s_{t},a_{t}) + \alpha \mathscr{H}(\pi(\cdot|s_{t})) \Big],
\end{equation}
\noindent where $\mathscr{H}(\pi(\cdot|s_{t}))=-\mathbb{E}_{a_{t}\sim\pi}[ \log \pi(a_{t}|s_{t}) ]$ denotes the policy entropy at state $s_{t}$, and $\alpha>0$ is the temperature parameter that balances reward maximization and policy exploration.

\par To accomplish the above objective, SAC adopts an actor–critic framework that incorporates two soft Q-networks $Q_{\theta_{1}}(s, a)$ and $Q_{\theta_{2}}(s, a)$, their corresponding target networks $Q_{\bar{\theta}_{1}}(s, a)$ and $Q_{\bar{\theta}_{2}}(s, a)$, and an actor network $\pi_{\phi}(a|s)$ parameterized by a Gaussian distribution. Accordingly, the soft Q-networks are optimized by minimizing the Bellman residual loss, which can be given as follows:
\begin{equation}
    \label{Equ: soft-q-network update}
    \mathscr{L}_{Q}(\theta_{i}) = \mathbb{E}
    \Bigg[ \frac{1}{2}\Big(Q_{\theta_{i}}(s_{t},a_{t}) - y_{t}\Big)^{2}\Bigg], i \in \{1,2\},
\end{equation}
\noindent where $y_{t}$ is the target value, defined as follows:
\begin{equation}
    \begin{aligned}
    y_{t} = r_{t} + \gamma \mathbb{E}\big[ &\min_{i=1,2} Q_{\bar{\theta}_{i}}(s_{t+1},a_{t+1}) - \\
    & \alpha \log \pi_{\phi}(a_{t+1}|s_{t+1}) \big].
    \end{aligned}
\end{equation}

\par Derived from the maximum entropy principle \cite{Ziebart2008}, the policy is updated by minimizing the objective as follows:
\begin{equation}
    \mathscr{L}_{\pi}(\phi) = \mathbb{E}
    \Big[ \alpha \log \pi_{\phi}(a_{t}|s_{t}) - \min_{i=1,2} Q_{\theta_{i}}(s_{t},a_{t}) \Big].
\end{equation}

\par To adaptively control the balance between exploitation and exploration, the temperature parameter $\alpha$ is learned by minimizing the objective as follows:
\begin{equation}
    \label{Equ: alpha update}
    \mathscr{L}(\alpha) = \mathbb{E} \Big[ -\alpha \big(\log \pi_{\phi}(a_{t}|s_{t}) + \bar{\mathscr{H}}\big) \Big],
\end{equation}
\noindent where $\bar{\mathscr{H}}$ is a target entropy that specifies the desired exploration level.

\par However, the standard SAC algorithm still faces two primary challenges when addressing the formulated optimization problem. On the one hand, despite employing the entropy maximization to encourage exploration, SAC parameterizes its policy as a unimodal Gaussian distribution, which inherently limits its ability to represent multiple distinct optimal action modes that may coexist in our considered environments. Consequently, the policy may converge to suboptimal averaged behaviors rather than capturing the true multimodal structure of the policy space. On the other hand, the unstructured action representation of SAC ignores critical intra-action semantics, i.e., precoding vectors have no per-dimension physical bounds and are subject to global norm constraints, while relative positions of antenna elements of the low-altitude UAV-carried MA array must respect spatial ordering. Neither of these structural properties can be captured by modeling action dimensions as independent Gaussian samples.

\subsection{MoE-SAC Algorithm}

\begin{figure*}[t]
	\centering
	\includegraphics[width=\linewidth]{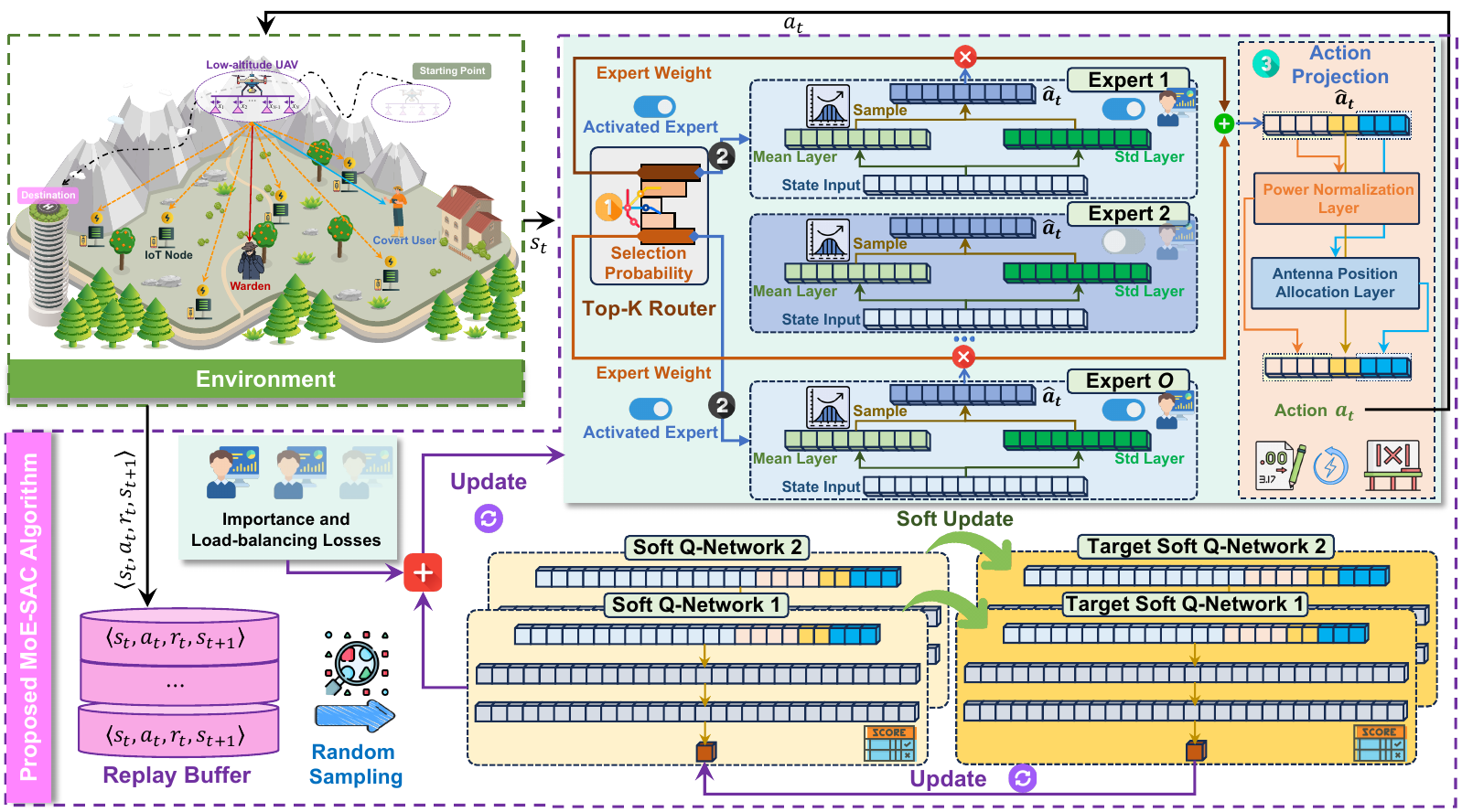}
	\caption{Framework of the MoE-SAC algorithm for the considered low-altitude UAV-carried MA-enhanced transmission system for joint WPT and covert communications. \smallcircled{1} The MoE-SAC algorithm employs a sparse Top-K router to select the K experts with the highest expert weights. \smallcircled{2} A mixture-of-shallow-experts architecture composed of the selected K experts is adopted to jointly model multimodal policy distributions. \smallcircled{3} The sampled raw actions are projected through an action projection module to explicitly enforce per-time-slot power budget constraints and antenna position constraints.}
	\label{Fig: Algorithm Framework}
\end{figure*}
\par To address the drawbacks of standard SAC for the formulated optimization problem, we propose a novel MoE-SAC algorithm that introduces two critical enhancements. First, we replace the conventional unimodal policy with an MoE-enhanced actor network, thereby enabling the representation of multimodal action distributions that capture multiple distinct optimal strategies. Then, we introduce an action projection module that maps raw policy outputs onto the feasible action space, which ensures that the generated precoding vectors and relative positions of antenna elements respect physical constraints and intra-action structural semantics. Accordingly, Fig.~\ref{Fig: Algorithm Framework} depicts the overall framework of the proposed MoE-SAC algorithm, and the following subsections detail the design of the MoE-enhanced actor network and action projection module.

\subsubsection{MoE-enhanced Actor Network}

\par To overcome the unimodality limitation while decreasing the additional overhead brought by MoE as much as possible, we adopt a sparse mixture-of-shallow-experts architecture for the actor network. Specifically, the MoE-enhanced actor network consists of two core components, i.e., a trainable router and multiple lightweight shallow policy networks.

\par Formally, the router takes the current state of the environment as input and produces a set of logits by a parameterized router $\hat{g}_{\Upsilon}(s_{t})$, each representing the expert weight of one expert. To achieve this, we employ the Top-K routing mechanism where the router selects only the $K$ experts with the highest expert weights. This process can be mathematically expressed as follows:
\begin{equation}
    \label{Equ: routing network}
    g_{\Upsilon}(s_{t}) = \mathrm{Softmax}\left(\mathrm{ToP}_{\mathrm{K}}\left(\mathrm{Softmax}\left(\hat{g}_{\Upsilon}(s_{t})\right)\right)\right),
\end{equation}

\noindent where $\mathrm{Softmax}(\cdot)$ refers to the softmax activation function, and $\mathrm{ToP}_{\mathrm{K}}(\cdot)$ is an operator that preserves only the $K$ largest entries while setting all others to $0$.

\par Then, we calculate the final output of the MoE-enhanced actor network as a weighted combination of multiple single-layer shallow experts, which can be expressed as follows:
\begin{equation}
    \label{Equ: actor network}
    \pi_{\phi}(\hat{a}_{t}|s_{t}) = \sum_{o=1}^{O} \left[g_{\Upsilon}(s_{t})\right]_{o} \cdot \, \pi_{\phi_{o}}(\hat{a}_{t}|s_{t}),
\end{equation}

\noindent where $O$ is the total number of experts. Moreover, $\left[g_{\Upsilon}(s_{t})\right]_{o}$ and $\pi_{\phi_{{o}}}(a_{t}|s_{t})$ denote the corresponding expert weight and action distribution produced by the $o$-th expert given state $s_{t}$, respectively.

\par Inspired by \cite{Riquelme2021}, we further augment the objective function of the MoE-enhanced actor network with an auxiliary loss that jointly encourages balanced expert utilization, which is defined as follows:
\begin{equation}
    \begin{aligned}
        \mathscr{L}_{\mathrm{aux}} &(\Upsilon) =
\frac{1}{2} \left( 
    \frac{
        \mathrm{Std}\left( 
            \left\{ \sum_{s_{t}} \left[ g_{\Upsilon}(s_{t}) \right]_o \right\}_{o=1}^O 
        \right)
    }{
        \mathrm{Mean}\left( 
            \left\{ \sum_{s_{t}} \left[ g_{\Upsilon}(s_{t}) \right]_o \right\}_{o=1}^O 
        \right)
    } 
\right)^2 \\+ &
\frac{1}{2} \left( 
    \frac{
        \mathrm{Std}\left( 
            \left\{ \sum_{s_{t}} \left[ \mathrm{Softmax}\left( \hat{g}_{\Upsilon}(s_{t}) \right) \right]_o \right\}_{o=1}^O 
        \right)
    }{
        \mathrm{Mean}\left( 
            \left\{ \sum_{s_{t}} \left[ \mathrm{Softmax}\left( \hat{g}_{\Upsilon}(s_{t}) \right)  \right]_o \right\}_{o=1}^O 
        \right)
    } 
\right)^2,
    \end{aligned}
\end{equation}

\noindent where $\mathrm{Mean}(\cdot)$ and $\mathrm{Std}(\cdot)$ denote the mean and standard deviation operators, respectively. Moreover, the first item and the second term of the auxiliary loss represent the importance and load-balancing losses, respectively. Thus, the final loss of the MoE-enhanced actor network can be written as follows:
\begin{equation}
    \label{Equ: actor network update}
\begin{aligned}
    \mathscr{L}_{\pi}(\phi) = \mathscr{L}_{\mathrm{aux}}(\Upsilon) + \mathscr{L}_{\pi}(\phi).
\end{aligned}
\end{equation}

\subsubsection{Action Projection Module}

\par The raw actions sampled from the MoE-enhanced actor network do not inherently respect constraints \eqref{optimization_problem_constraint_6}, \eqref{optimization_problem_constraint_7} and \eqref{optimization_problem_constraint_8}. To address this, we introduce an action projection module that maps raw policy output to the feasible action space.

\par For the precoding vector component of the low-altitude UAV-carried MA array, the raw action includes the precoding vectors for information and energy signals. To satisfy the per-time-slot power constraint, the power normalization layer is introduced to obtain the legal precoding vectors, which can be represented as follows:
\begin{equation}
    \label{Equ: power normalization layer}
    \hat{\mathbf{w}_{t}^{i}} =
    \begin{cases}
    \sqrt{\text{p}_{\max}}\cdot \tanh(\mathbf{w}_{t}^{i}), & P_{t}  \leq \text{p}_{\max}, \\[0.8ex]
    \dfrac{\sqrt{\text{p}_{\max}}\cdot \tanh(\mathbf{w}_{t}^{i})}{P_{t}}, & \text{otherwise},
\end{cases}
\quad i \in \{\text{c}, \text{e}\},
\end{equation}

\noindent where $P_{t} = \Vert \mathbf{w}_{t}^{\text{c}}\Vert ^{2} + \Vert \mathbf{w}_{t}^{\text{e}}\Vert ^{2}$ represents the total transmit power induced by the raw actions at time slot $t$ and $\tanh(\cdot)$ ensures that the network outputs remain bounded.

\par For the relative positions of antenna elements, the raw action is first decomposed into a total spacing variable $\varpi_{t}$ and a set of spacing ratios $\{\upsilon_{t}^{1}, \dots, \upsilon_{t}^{N-1}\}$. To ensure feasibility, the normalized total span and spacing ratios are given as follows:
\begin{equation}
\begin{cases}
    \hat{\varpi}_{t} = \dfrac{\tanh(\varpi_{t})+1}{2},\\[0.8ex]
    \hat{\upsilon}_{t}^{n} = \dfrac{\exp(\upsilon_{t}^{n})}{\sum_{j=1}^{N-1}\exp(\upsilon_{t}^{j})}, \quad n \in \mathscr{N}\setminus\{N\}.
\end{cases}
\end{equation}

\par Then, the inter-antenna spacing is determined as follows:
\begin{equation}
    \delta_{t}^{n} = \hat{\upsilon}_{t}^{n} \cdot \hat{\varpi}_{t} \cdot \left(L - (N-1)\frac{\lambda}{2}\right) + \frac{\lambda}{2}, \quad n \in \mathscr{N} \setminus \{N\},
\end{equation}
\noindent where $\delta_{t}^{n}$ denotes the spacing between the $n$-th and $(n+1)$-th movable antennas. Since the performance is determined by inter-antenna spacings rather than absolute positions \cite{Kang2024}, the antenna positions can be recursively obtained as follows:
\begin{equation}
    \label{Equ: ma positions layer}
\begin{cases}
    x_{t}^{1} = 0, \\
    x_{t}^{n} = x_{t}^{n-1} + \delta_{t}^{n-1}, \quad n \in \mathscr{N} \setminus \{1\}.
\end{cases}
\end{equation}

\begin{figure}[t]
	\removelatexerror
	\begin{algorithm}[H]
		\raggedright
		\caption{MoE-SAC Algorithm}
		\label{Algorithm 1: MoE-SAC Algorithm}
        \SetKwComment{tcc}{\textcolor{purple}{$\unrhd$ }}{}
        \LinesNumbered
        \tcc{Training Phase \textcolor{purple}{$\unlhd$} }
        Initialize the MoE-enhanced actor network parameters $\phi$, and soft Q-network parameters $\theta_{1}$ and $\theta_{2}$ as well as their corresponding target network parameters $\bar\theta_{1}$ and $\bar\theta_{2}$\;
        $\bar\theta_{1} \gets \theta_{1}, \bar\theta_{2} \gets \theta_{2}$ \;
        Initialize replay buffer $\mathscr{B} = \varnothing$\;
        \For{each episode}{
            Reset environment and initialize state $s_{0}$\;  
            \For{each time step $t$}{
                Select action $\hat{a}_{t} \sim \pi_{\phi}(a_{t}|s_{t})$\;
                Execute action projection to obtain $a_{t}$ according to Eqs. \eqref{Equ: power normalization layer} and \eqref{Equ: ma positions layer}\;
                Execute action $a_{t}$ in environment, then observe reward $r_{t}$ and next state $s_{t+1}$\;
                Store transition $(s_{t}, a_{t}, r_{t}, s_{t+1})$ in replay buffer $\mathscr{B}$\;
                
                \If{current\_step $\geq$ start\_learning\_step}{
                    Sample a batch from $\mathscr{B}$\;
                    Compute the loss of soft Q-networks by Eq. \eqref{Equ: soft-q-network update} and update soft Q-networks by the gradient descent method\;
                    Compute the loss of the MoE-enhanced actor network by Eq. \eqref{Equ: actor network update} and update the MoE-enhanced actor network by the gradient descent method\;
                    Update temperature parameter $\alpha$ according to Eq. \eqref{Equ: alpha update}\;
                    Update target soft Q-networks by the soft update mechanism $\bar{\theta}_{i} \gets \tau\theta_{i} + (1-\tau)\bar{\theta}_{i},~i = \{1, 2\};$
                }
            }
        }
        \setcounter{AlgoLine}{0}
        \tcc{Execution Phase \textcolor{purple}{$\unlhd$} }
        \For{each time step $t$}{
            Select action $\hat{a}_{t} \sim \pi_{\phi}(a_{t}|s_{t})$\;
            Execute action projection to obtain $a_{t}$ according to Eqs. \eqref{Equ: power normalization layer} and \eqref{Equ: ma positions layer}\;
        }
	\end{algorithm}
\end{figure}

\subsection{Main Steps of MoE-SAC Algorithm}

\par The main steps of the MoE-SAC algorithm can be summarized in Algorithm \ref{Algorithm 1: MoE-SAC Algorithm}, and consists of two sequential phases, i.e., the training phase and the execution phase. 

\par In the training phase, the proposed MoE-SAC algorithm begins with a warm-up period to populate the replay buffer with initial exploration data. During this period, the low-altitude UAV as agent interacts with the environment by sampling actions from the randomly initialized policy network, applying projection operations to enforce feasibility constraints, executing these actions to observe state transitions and rewards, and storing the resulting tuples in the replay buffer until sufficient samples are collected to enable stable off-policy learning. Then, the algorithm enters the training period to perform off-policy updates by sampling a mini-batch of transitions from the replay buffer. The twin soft Q-networks are first updated via Eq. \eqref{Equ: soft-q-network update} to minimize the temporal difference error, thereby learning accurate value estimates. Subsequently, the policy network parameters are updated via Eq. \eqref{Equ: actor network update} through policy gradient ascent to maximize the expected cumulative reward. Concurrently, the entropy temperature parameter is automatically adjusted via Eq. \eqref{Equ: alpha update} to dynamically balance exploration and exploitation throughout the training process. Finally, the target Q-networks are soft-updated using exponential moving averages with a momentum coefficient $\tau$, which stabilizes the learning dynamics by providing slowly-changing targets.

\par In the execution phase after training convergence, the MoE-SAC algorithm only needs to employ the trained MoE-enhanced actor network and action projection module to generate feasible actions, including the velocity of the low-altitude UAV, precoding vectors and the antenna element positions of the low-altitude UAV-carried MA array.

\subsection{Complexity Analyses}
\par The computational and space complexities of the proposed MoE-SAC algorithm are analyzed as follows.
\par \textbf{Training Phase}: The computational complexity of the MoE-SAC algorithm can be expressed as $\mathscr{O}(2|\phi| + 6|\theta| + ETG(|\Upsilon| + K|\phi_{o}|+Z) + BTG(2|\theta| + |\phi|))$, which can be broken down as follows \cite{Zhang2025}:

\begin{itemize}
    \item \textit{Network Initialization}: The computational complexity for initializing network parameters is $\mathscr{O}(|\phi| + 4|\theta|)$, where $|\phi|$ denotes the total number of parameters in the MoE-enhanced actor network, which includes router parameters $|\Upsilon|$ and all expert networks with the number of parameters $O\cdot|\phi_{o}|$, and $|\theta|$ represents the number of parameters in the soft Q-network.

    \item \textit{Action Generation and Execution}: The complexity of action generation and execution is $\mathscr{O}(ETG(|\Upsilon| + K|\phi_{o}|+Z))$, where $E$ is the total number of training episodes, $T$ represents the number of steps per episode, $G$ is the complexity of environment interactions, and $Z$ is the complexity of the action projection module.

    \item \textit{Network Updates}: This phase consists of soft Q-network updates and MoE-enhanced actor network updates. The soft Q-network updates based on Eq. \eqref{Equ: soft-q-network update} have complexity $\mathscr{O}(2BTG|\theta|)$, while updating the actor network based on Eq. \eqref{Equ: actor network update} has complexity $\mathscr{O}(BTG|\phi|)$, where $K$ is the number of experts activated per step via the ToP-$K$ routing mechanism. Moreover, the complexity of soft update is $\mathscr{O}(2|\theta| + |\phi|)$
\end{itemize}

\par The space complexity of the MoE-SAC algorithm accounts for the neural network parameters and replay buffer. This can be expressed as $\mathscr{O}(|\phi| + 4|\theta| + |D|(2|s| + |a| + 1))$, where $|D|$ is the replay buffer size, and $|s|$ and $|a|$ represent the dimensions of state and action spaces.

\par \textbf{Execution Phase}: During execution, the computational complexity of the MoE-SAC algorithm is $\mathscr{O}(T(|\Upsilon| + K|\phi_{o}| + Z))$, which includes the forward pass of the MoE-enhanced actor network with router computation and $K$ activated experts, plus the action projection module. Accordingly, the space complexity is $O(|\phi|)$, primarily from storing the parameters of the trained MoE-enhanced actor network.

\section{Simulation Results and Analysis}
\label{sec_performance_evaluation}

\par In this section, we present the simulation results and performance analysis of the proposed low-altitude UAV-carried MA-enhanced transmission system for joint WPT and covert communications.

\subsection{Simulation Setups}

\subsubsection{Scenario Description and Algorithm Setup}

\par We consider a low-altitude UAV-carried MA-enhanced transmission scenario in a $300~\text{m} \times 300~\text{m}$ area. Specifically, a low-altitude UAV equipped with $5$ antenna elements operates at a fixed altitude of $60$~m and simultaneously serves $4$ battery-constrained IoT nodes and one covert user. The total mission duration is discretized into $40$ time slots, each of duration $1$ s. The low-altitude UAV must fly from $[0, 0, 60]^\text{T}$ to $[280, 280, 60]^\text{T}$, with a maximum speed of $35$~m/s. Moreover, the required covertness level is set to $\xi = 0.1$ \cite{Wu2024}. The proposed MoE-SAC algorithm employs $8$ experts with $3$ activated experts. Furthermore, the soft Q-networks consist of two hidden layers with $256$ neurons, and the learning rates of all networks are set to $3 \times 10^{-4}$. In addition, other simulation parameters are shown in Table~\ref{Tab: Simulation Parameters}. 

\begin{table}[t]
\centering
\renewcommand{\arraystretch}{1.2}
\caption{Simulation Parameters}
\label{Tab: Simulation Parameters}
\begin{tabular}{ll|ll}
\toprule[1.5pt]
Parameter & Value & Parameter & Value \\
\hline
$L$ & $1.0$ m & $\lambda$ & $0.125$ m \\
$\alpha_{m}$ & $2$ & $\alpha_{\text{c}}$ & $2$\\ $\alpha_{\text{w}}$ & $2$& $\beta_0$ & $-40$ dB \\
$\sigma^2$ & $-140$ dBm  & $\eta$ & $0.6$\\
$|D|$ & $10^6$ & $B$ & $256$\\
$\gamma$ & $0.99$ & $\tau$ & $0.005$ \\
\bottomrule[1.5pt]
\end{tabular}
\end{table}

\subsubsection{Baselines}
\par To evaluate the performance of the proposed approach, we compare it against a set of representative baseline approaches, which are listed as follows:

\begin{itemize}
    \item \textit{Random}: At each time slot, the velocity of the low-altitude UAV, precoding vectors and antenna positions of the low-altitude UAV-carried MA array are randomly selected within the feasible physical space.
    
    \item \textit{Fixed Antenna Position (FAP)}: The relative positions of antenna elements are fixed at equally spaced intervals, while the trajectory of the low-altitude UAV and precoding vectors of the low-altitude UAV-carried MA array are optimized via the proposed MoE-SAC algorithm.

    \item \textit{Fixed Trajectory (FT)}: The low-altitude UAV follows a pre-determined straight-line path from the start point to the destination, while the precoding vectors and antenna element positions of the low-altitude UAV-carried MA array are jointly optimized via the proposed MoE-SAC algorithm.
\end{itemize}

\par In addition to the abovementioned system-level baselines, we also compare the proposed MoE-SAC algorithm with some other state-of-the-art DRL algorithms, which are listed as follows:
\begin{itemize}
    \item \textit{Deep Deterministic Policy Gradient (DDPG)}~\cite{Lillicrap2016}: A deterministic policy gradient algorithm that utilizes a single critic and actor network with target networks and replay buffer, serving as a representative DRL baseline.

    \item \textit{Twin Delayed Deep Deterministic Policy Gradient (TD3)}~\cite{Fujimoto2018}: An improved variant of DDPG that addresses overestimation bias via twin critics, delayed policy updates and target policy smoothing, which is known for its stability in continuous action spaces.

    \item \textit{Standard SAC}~\cite{Haarnoja2018}: The SAC with a unimodal Gaussian policy, automatic entropy tuning and twin soft Q-networks, which is considered as the state-of-the-art algorithm in maximum-entropy DRL for continuous tasks.
\end{itemize}
\begin{figure}[t]
	\centering
	\includegraphics[width=0.95\linewidth]{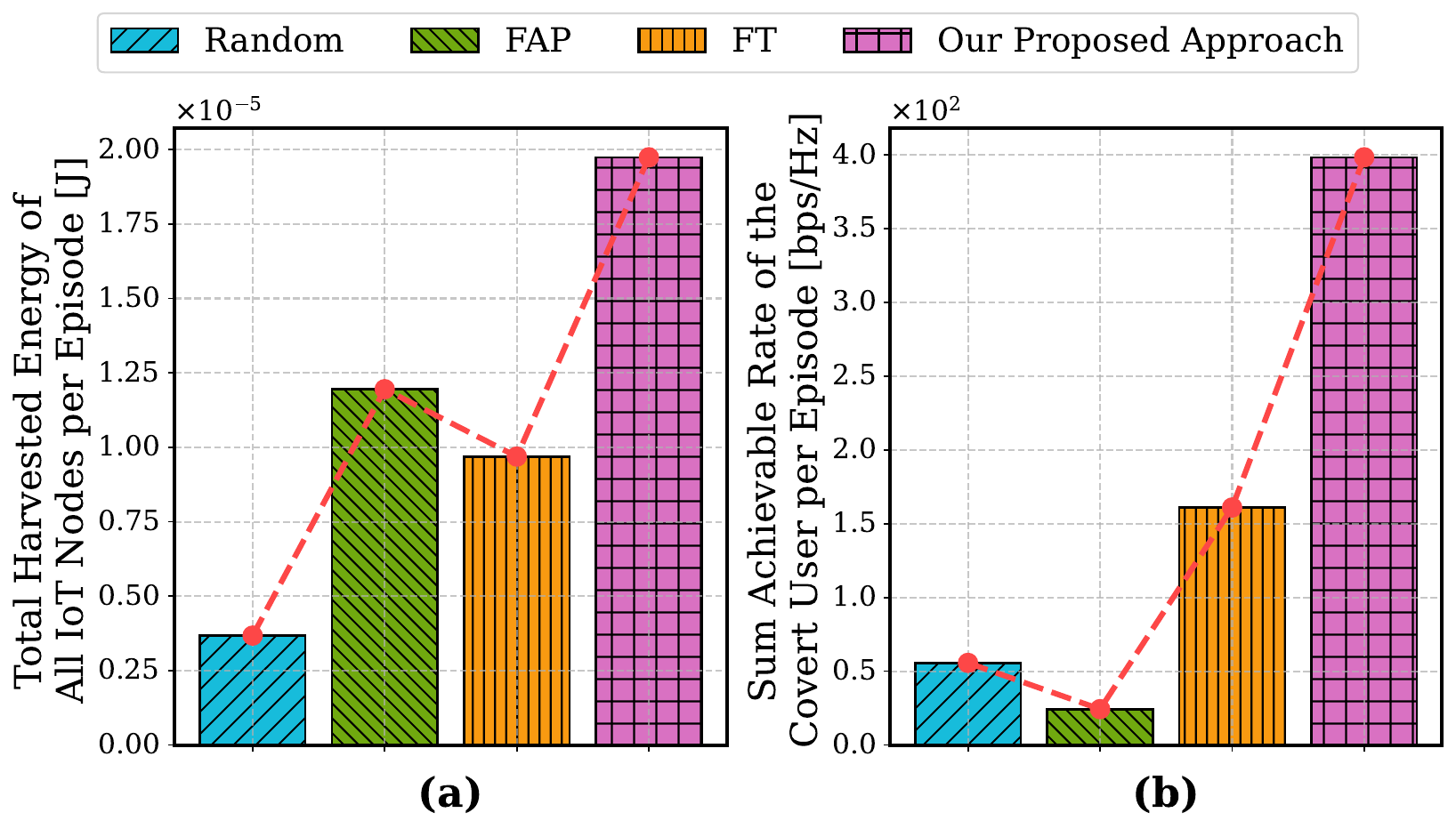}
	\caption{Performance comparison from different approaches: (a) Total harvested energy of all IoT nodes per episode and (b) Sum achievable rate of the covert user per episode.}
	\label{Fig: Performance comparison obtained by different approaches}
\end{figure}

\subsection{Performance Evaluation}
\begin{figure*}[t]
	\centering
	\includegraphics[width=\linewidth]{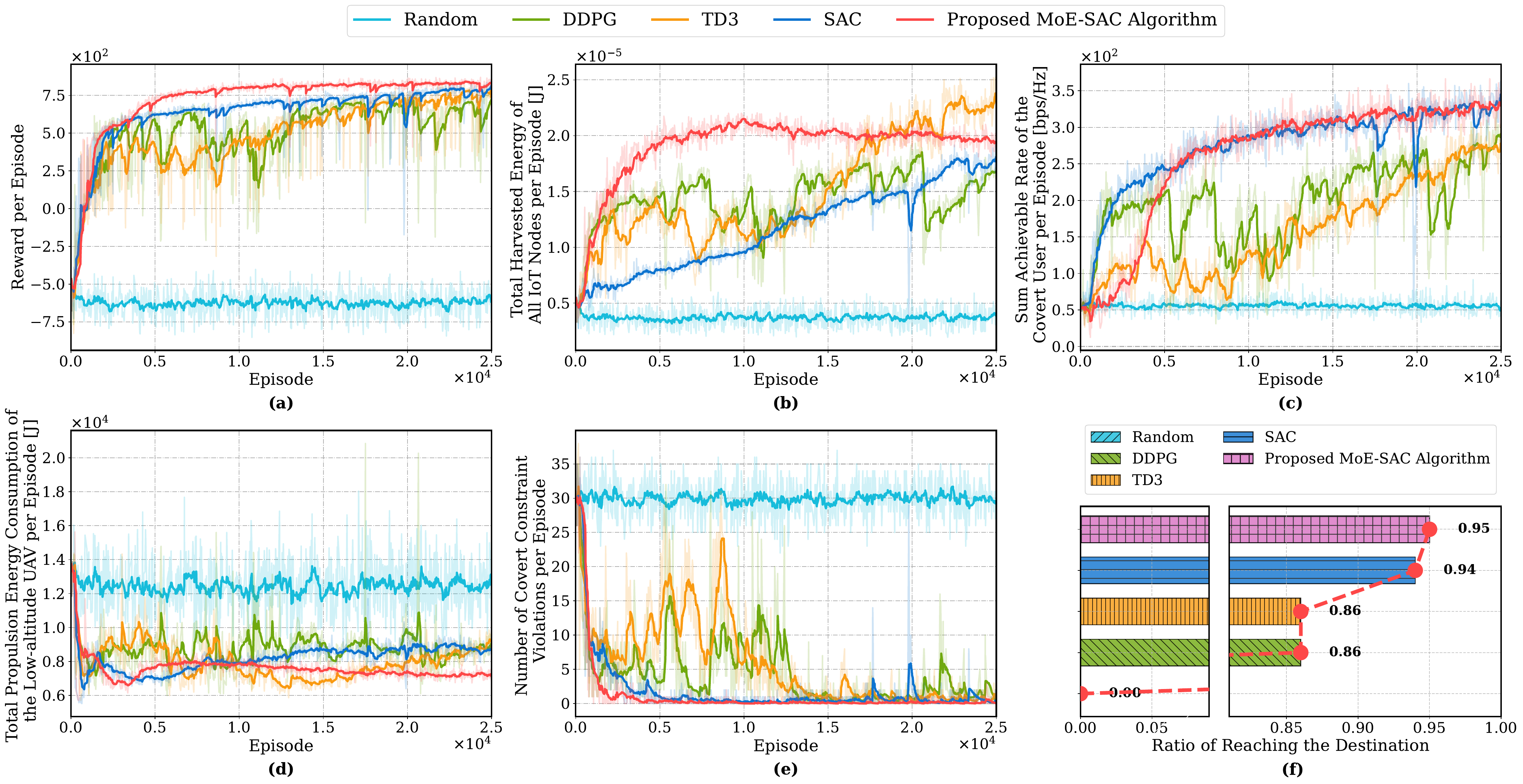}
	\caption{Training performance comparison of the proposed MoE-SAC algorithm against baseline algorithms: (a) Episode reward, (b) Total harvested energy of IoT nodes per episode, (c) Sum achievable rate of the covert user per episode, (d) Total propulsion energy consumption of the low-altitude UAV per episode, (e) Number of covert constraint violations per episode, and (f) Ratio of reaching the destination during the training.}
	\label{Fig: Training performance comparison}
\end{figure*}

\subsubsection{Comparison with Other Approaches}

\par Fig.~\ref{Fig: Performance comparison obtained by different approaches} presents the performance of our proposed approach in comparison with three system-level baselines. As shown in Figs.~\ref {Fig: Performance comparison obtained by different approaches}(a) and \ref {Fig: Performance comparison obtained by different approaches}(b), our proposed approach achieves the highest total harvested energy of all IoT nodes per episode and sum achievable rate of the covert user per episode, significantly outperforming FT and FAP. Specifically, this dual superiority stems from the joint optimization of the trajectory of the low-altitude UAV, precoding vectors and antenna element positions of the low-altitude UAV-carried MA array. By dynamically adjusting its flight path, the low-altitude UAV can approach both the IoT nodes and covert user to improve channel conditions, while the low-altitude UAV-carried MA array enables adaptive beamforming that simultaneously focuses energy toward the IoT nodes and forms a high-gain directional link toward the covert user to satisfy the covert constraint. In contrast, FT adopts a fixed straight-line trajectory, which limits its ability to approach ground IoT nodes and thus degrades both the energy harvesting of IoT nodes and achievable rate of the covert user. Although optimizing the trajectory of the low-altitude UAV and precoding vectors of the low-altitude UAV carried MA array, FAP fixes the element positions of the antenna array, thereby restricting beamforming flexibility and preventing effective simultaneous service to both the IoT nodes and covert user. Notably, Random achieves a slightly higher achievable rate for the covert user than FAP since it does not enforce the covert constraint during action selection. As a result, it may occasionally transmit with higher information power or favorable antenna configurations that strengthen the covert link, albeit at the expense of an increased detection risk.

\subsubsection{Comparison with Other DRL Algorithms}

\par Fig.~\ref{Fig: Training performance comparison} presents the training performance comparison of the proposed MoE-SAC algorithm against several representative DRL-based algorithms with Random serving as the worst-case performance. As can be seen in Fig.~\ref{Fig: Training performance comparison}(a), the reward curves of all algorithms exhibit significant oscillations during the initial training phase. This phenomenon arises from the insufficient experience of the agent in learning the action constraints, thus leading to frequent boundary and covertness violations that incur penalties. As the number of training episodes grows, all methods show convergence in their reward curves, reflecting progressively stabilized policy learning. Among all the compared algorithms, our proposed MoE-SAC algorithm attains a higher reward and exhibits faster convergence than the other methods. This is because the MoE-enhanced actor network enables the policy to capture multiple distinct optimal action modes, thereby facilitating faster exploration and better adaptation to our considered optimization problem.

\par Figs.~\ref{Fig: Training performance comparison}(b), \ref{Fig: Training performance comparison}(c), and \ref{Fig: Training performance comparison}(d) collectively reflect the optimization objectives of the system. We can observe that the proposed MoE-SAC algorithm not only achieves superior final performance but also demonstrates greater stability throughout the training process. Specifically, the learning curve of the MoE-SAC algorithm exhibits smoother convergence than DDPG and TD3, which suffer from frequent constraint violations due to their deterministic or less robust exploration mechanisms. Moreover, although benefiting from entropy-driven exploration, standard SAC is still limited by its unimodal Gaussian policy, which cannot represent the complex and multimodal nature of the optimal policy. In addition, Figs.~\ref{Fig: Training performance comparison}(e) and \ref{Fig: Training performance comparison}(f) illustrate the ability to satisfy critical system constraints under different algorithms. It can be seen that the proposed MoE-SAC algorithm exhibits the fewest violations of the covertness requirement and achieves a near-perfect success rate in reaching the destination, reflecting its superior capability to handle multiple coupled constraints simultaneously. This advantage primarily stems from the MoE-enhanced policy representation, which enables the agent to learn a diverse set of action strategies through multiple experts that balance the performance among communication, covertness and navigation.

\subsubsection{Low-altitude UAV Trajectory Visualization}

\par Fig.~\ref{Fig: Trajectory} visualizes the flight trajectory generated by different algorithms. Specifically, the proposed MoE-SAC algorithm clearly demonstrates a well-planned strategy that balances multiple objectives while ensuring mission completion. Along the trajectory, the low-altitude UAV takes a careful detour around the Warden near the bottom-right corner, showing its learned ability to avoid detection. While the trajectories generated by DDPG and TD3 generally follow a similar path, they do not approach the covert user as closely as the trajectory generated by the proposed MoE-SAC algorithm. In contrast, SAC guides the low-altitude UAV toward a nearby IoT node close to the Warden for energy transfer, inadvertently compromising covertness due to the reduced distance.
\begin{figure}[t]
	\centering
	\includegraphics[width=\linewidth]{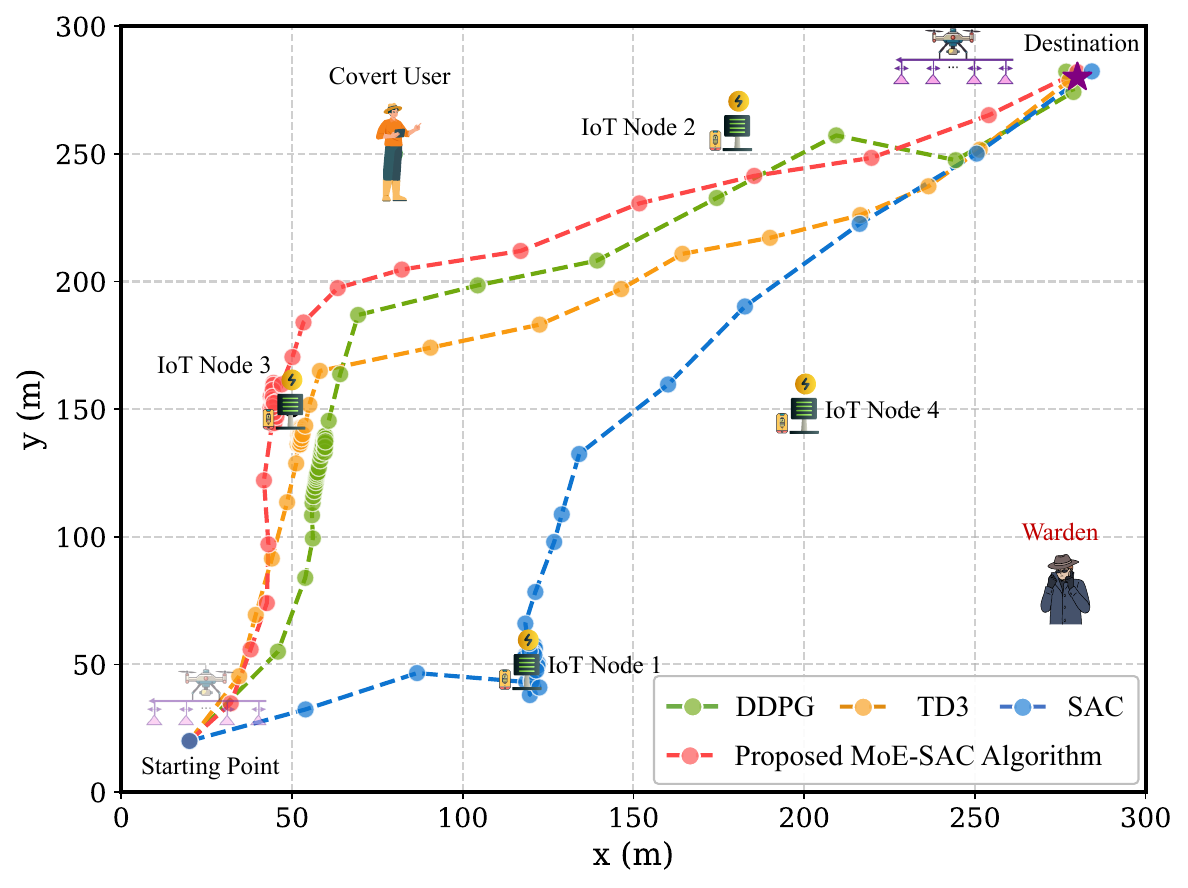}
	\caption{Trajectory of the low-altitude UAV under the different algorithms.}
	\label{Fig: Trajectory}
\end{figure}

\section{Conclusion}
\label{sec_conclusion}

\par In this paper, we have designed a novel low-altitude UAV-carried MA-enhanced transmission system for joint WPT and covert communications. Then, we have derived the covert requirement and formulated a multi-objective optimization problem aiming to maximize the total harvested energy of all IoT nodes and sum achievable rate of the covert user while minimizing the propulsion energy consumption of the low-altitude UAV. To solve this non-convex and temporally coupled optimization problem, we have reformulated it as an MDP and proposed the MoE-SAC algorithm that combines a sparse Top-K gated mixture-of-shallow-experts architecture with an action projection module to represent multimodal policy distributions and enforce critical system constraints. Simulation results have demonstrated that the MoE-SAC algorithm significantly outperforms several baseline approaches and state-of-the-art DRL algorithms, achieving superior performance in terms of the total harvested energy of all IoT nodes, sum achievable rate of the covert user, and propulsion energy consumption of the low-altitude UAV.

\ifCLASSOPTIONcaptionsoff
\newpage
\fi

\bibliographystyle{IEEEtran}
\bibliography{references.bib}

\vfill

\end{document}